\begin{document}

\title{Nonexistence of Generalized Bent Functions From $Z_{2}^{n}$ to $Z_{m}$\thanks{K. Feng is supported by NSFC No. 11471178 and the National Lab. on Information Science and Technology of Tsinghua University.}
}


\author{Haiying Liu \and Keqin Feng \and Rongquan Feng
}


\institute{H. Liu \at
              School of Mathematical Sciences, Peking University, Beijing, 100871, China \\
              \email{hyliumath@pku.edu.cn}             \\
           \and
           K. Feng \at
              Department of Mathematical Sciences, Tsinghua University, Beijing, 100084, China \\
               \email{kfeng@math.tsinghua.edu.cn}             \\
                 \and
           R. Feng \at
             School of Mathematical Sciences, Peking University, Beijing, 100871, China \\
               \email{fengrq@math.pku.edu.cn}             \\
}


\maketitle

\begin{abstract}
Several nonexistence results on generalized bent functions $f:Z_{2}^{n} \rightarrow Z_{m}$ presented by using some knowledge on cyclotomic number fields and their imaginary quadratic subfields.
\keywords{generalized bent function \and imaginary quadratic field \and cyclotomic field \and  class number \and decomposition field}
\subclass{ 11A07 \and 11R04 \and 11T71}
\end{abstract}

\section{Introduction}
\label{intro}
Let $m$ and $n$ be positive integers, $m\geq 2$, $Z_{m}=\mathbb{Z}/m\mathbb{Z}=\{0, 1,\ldots, m-1\}$. A mapping
 $$f(x)=f(x_{1},\ldots, x_{n}):Z_{2}^{n} \rightarrow Z_{2}\quad (x=(x_{1},\ldots, x_{n})\in Z_{2}^{n})$$
 is called a boolean function with $n$ variables. The Fourier transformation of $(-1)^{f}:Z_{2}^{n} \rightarrow \{\pm 1\}\subseteq \mathbb{C}$ is the function
 $W_{f}:Z_{2}^{n} \rightarrow \mathbb{Z}$ defined by
 $$W_{f}(y)=\sum_{x\in Z_{2}^{n}}(-1)^{f(x)+x\cdot y}\quad (y=(y_{1},\ldots, y_{n})\in Z_{2}^{n})$$
 where $x\cdot y=\sum_{i=1}^{n}x_{i}y_{i}\in Z_{2}$. $f$ is called a bent function if $\big|W_f(y)\big|=2^{\frac{n}{2}}$ holds for every $y\in Z_{2}^{n}$.

The notation of bent functions was presented by Rothaus \cite{R} in 1976, and generalized by Kumar et al. \cite{KSW} in 1985 for $f:Z_{m}^{n} \rightarrow Z_{m}$. Being a family of functions with the best nonlinearity, bent functions draw much attention and are widely investigated. They are applied in many fields such as communication theory, cryptography and closely related to coding theory and combinatorial design theory. In this paper we deal with the function $$f(x)=f(x_{1},\ldots, x_{n}):Z_{2}^{n} \rightarrow Z_{m}.$$ We call $\{m,n\}$ the type of such functions. For $m=2$, this is the usual boolean function. Let $\zeta_{m}=e^{\frac{2\pi \sqrt{-1}}{m}}\in \mathbb{C}$, we define the Fourier transformation of $\zeta_{m}^{f}:Z_{2}^{n} \rightarrow \langle\zeta_{m}\rangle\subseteq \mathbb{C}$ by $W_{f}:Z_{2}^{n} \rightarrow \mathbb{Z}[\zeta_{m}]\subseteq\mathbb{C}$,
\begin{equation}
W_{f}(y)=\sum_{x\in Z_{2}^{n}}(-1)^{x\cdot y}\zeta_{m}^{f(x)}\ \ \ \ (y\in Z_{2}^{n} ).
\label{Wdefine}
\end{equation}
By Fourier inverse transformation we have
\begin{equation}
\zeta_{m}^{f(x)}=\frac{1}{2^{n}}\sum_{y\in Z_{2}^{n}}(-1)^{x\cdot y}W_{f}(y)\ \ \ \ (x\in Z_{2}^{n} ),
\label{WIdefine}
\end{equation}
and from (\ref{Wdefine}) we get
$$\frac{1}{2^{n}}\sum_{y\in Z_{2}^{n}}\big|W_{f}(y)\big|^{2}=2^{n}.$$
Therefore the maximum value $M=max\{\big|W_{f}(y)\big|:y\in Z_{2}^{n}\}$ has lower bound $2^{\frac{n}{2}}$ and $M=2^{\frac{n}{2}}$ if and only if $\big|W_{f}(y)\big|=2^{\frac{n}{2}}$ for all $y\in Z_{2}^{n}$.

\begin{definition}\label{DefineBent}
A function $f:Z_{2}^{n} \rightarrow Z_{m}$ is called $\{m,n\}$-type generalized bent function (GBF) if $\big|W_{f}(y)\big|=2^{\frac{n}{2}}$ for all $y\in Z_{2}^{n}$.
\end{definition}

For $m=2$, this is the usual boolean bent function in \cite{R}. It is well-known that there exists bent function $f:Z_{2}^{n} \rightarrow Z_{2}$
if and only if $n$ is even. And for even number $n$, many constructions on boolean bent functions with $n$ variables have been found. One of simple examples is
$f(x)=x_{1}x_{2}+x_{3}x_{4}+\cdots+x_{2t-1}x_{2t} (n=2t)$. For $m=4$ and $m=8$, several constructions and existence results on GBFs have been found in \cite{LTH,S,KUS,ST,SM,SMGS,T}.

Our main aim in this paper is to present several nonexistence results on $\{m,n\}$-type GBF. Firstly we collect existence results in Sect. \ref{sec:2} by using previous facts given in \cite{ST,SM} and some simple observations. Then we show nonexistence results in last three sections. In Sect. \ref{sec:3} we use a result in \cite{LL} to prove that for any odd prime power $p^{l}$, there is no $\{p^{l},n\}$-type GBF for any $n\geq 1$. In Sect. \ref{sec:4} and 5 we use some basic knowledge on cyclotomic number field $K=\mathbb{Q}(\zeta_{m})$ since, for $\{m,n\}-$type function $f$, the values of $W_{f}(y)$ belong to the ring $\mathcal{O}_{K}=\mathbb{Z}[\zeta_{m}]$ of (algebraic) integers of $K$. We introduce these basic facts at the beginning of Sect. \ref{sec:4}. Then we get further nonexistence results on GBF for semiprimitive case in Sect. \ref{sec:4} and by field-descent method in Sect. 5. The last Sect. 6 is conclusion.

\section{Existence Results}
\label{sec:2}
In this section we firstly collect previous existence results on $\{m,n\}$-type GBF $f:Z_{2}^{n} \rightarrow Z_{m}\qquad(n\geq1, m\geq2)$.

\begin{lemma}\label{Lemmaexist}
(1)(\cite{R}) There exists a $\{2,n\}$-type (boolean) bent function if and only if $n$ is even.

(2)(\cite{ST}) If both $n$ and $m$ are even, then there exists a $\{m,n\}$-type GBF.

(3)(\cite{SMGS}) If there exists a $\{2,n+1\}$-type (boolean) bent function, then there exists a $\{4,n\}$-type GBF.

\end{lemma}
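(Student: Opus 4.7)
The plan is to prove each of the three parts by giving (or recalling) an explicit construction and verifying that its Walsh transform has the required modulus.

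For part (1), the ``if'' direction is handled by the Maiorana--McFarland example $f(x)=x_1x_2+x_3x_4+\cdots+x_{2t-1}x_{2t}$ with $n=2t$ quoted in the introduction: factoring $W_f(y)=\sum_x(-1)^{f(x)+x\cdot y}$ into a product of $t$ independent two-variable sums, each contributing $\pm 2$, yields $|W_f(y)|=2^t=2^{n/2}$ for every $y$. The ``only if'' direction is immediate: the Walsh values lie in $\mathbb{Z}$, so the identity $W_f(y)^2=2^n$ forces $2^n$ to be a perfect square in $\mathbb{Z}$, hence $n$ to be even.

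For part (2), use (1) to pick a boolean bent function $g:Z_2^n\to Z_2$ and lift it to $Z_m$ by $f(x)=(m/2)\,g(x)$, which is well defined because $m$ is even. Since $\zeta_m^{m/2}=-1$, one has $\zeta_m^{f(x)}=(-1)^{g(x)}$, so the generalized Walsh transform $W_f(y)$ coincides with the boolean Walsh transform $W_g(y)$ and therefore has modulus $2^{n/2}$ for every $y$.

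For part (3), start from a boolean bent function $g:Z_2^{n+1}\to Z_2$ and split it along its last coordinate as $g_j(x)=g(x,j)$ for $j\in\{0,1\}$ and $x\in Z_2^n$. Writing $W_g(y,y_{n+1})=W_{g_0}(y)+(-1)^{y_{n+1}}W_{g_1}(y)$ and enforcing $|W_g|^2=2^{n+1}$ for both $y_{n+1}=0,1$ yields, by adding the two resulting relations, the key identity $W_{g_0}(y)^2+W_{g_1}(y)^2=2^{n+1}$. Now define $f:Z_2^n\to Z_4$ by $f(x)=\bigl(g_0(x)+g_1(x)\bmod 2\bigr)+2\,g_0(x)\pmod 4$; writing $\zeta_4^{f(x)}=i^{f_0(x)}(-1)^{f_1(x)}$ with $f_0=g_0+g_1\pmod 2$ and $f_1=g_0$, and using the algebraic identity $i^{f_0}=\tfrac{1+i}{2}+\tfrac{1-i}{2}(-1)^{f_0}$, one obtains $W_f(y)=\tfrac{1+i}{2}W_{g_0}(y)+\tfrac{1-i}{2}W_{g_1}(y)$. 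Taking modulus squared collapses this to $\tfrac12\bigl(W_{g_0}(y)^2+W_{g_1}(y)^2\bigr)=2^n$, as required.

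The only genuine obstacle is guessing the right $Z_4$-valued combination of $g_0,g_1$ in part (3); the decomposition of $i^{f_0}$ into a linear combination of $1$ and $(-1)^{f_0}$ is precisely what reduces the generalized Walsh transform to a pair of ordinary boolean ones, and then the bent identity for $n+1$ variables is exactly what makes the squared modulus evaluate to $2^n$.
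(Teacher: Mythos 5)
Your proposal is correct in all three parts. For part (3) you have in fact rediscovered the paper's own construction: your $f(x)=\bigl(g_0(x)+g_1(x)\bmod 2\bigr)+2g_0(x)$ is exactly the map sending $(g_0,g_1)=(0,0),(0,1),(1,1),(1,0)$ to $0,1,2,3$ used in Remark (B); the only difference is bookkeeping. The paper tracks the four counts $N_y(a,b)$ and extracts the two differences $N_y(0,0)-N_y(1,1)$ and $N_y(0,1)-N_y(1,0)$ from the bent condition, whereas you work directly with the restricted Walsh transforms $W_{g_0},W_{g_1}$ and the identity $i^{f_0}=\tfrac{1+i}{2}+\tfrac{1-i}{2}(-1)^{f_0}$; these are the same computation, since $\tfrac12\bigl(W_{g_0}\pm W_{g_1}\bigr)$ are precisely those two differences, and your presentation is arguably cleaner. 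Where you genuinely diverge is part (2): the paper exhibits a Maiorana--McFarland-type family $f(x,y)=g(y)+l\,x\cdot\sigma(y)$ on $Z_2^{t}\oplus Z_2^{t}$ (with $m=2l$, $n=2t$, $\sigma$ a permutation and $g$ arbitrary), while you simply lift a boolean bent function via $f=(m/2)g$, which is the embedding argument of Lemma~\ref{Lemmaobservation}(2) applied to part (1). Your route is shorter and fully suffices for the existence claim; the paper's construction buys a much larger family of GBFs that are not mere lifts from $Z_2$. Finally, your proof of part (1), which the paper only cites to Rothaus, is sound: the product factorization of $W_f$ over the $t$ pairs of variables gives the ``if'' direction, and the integrality of $W_f(y)$ together with $W_f(y)^2=2^n$ forces $n$ to be even for the ``only if'' direction.
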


\begin{remark}
(A) For proof of (2), let $m=2l,n=2t$, $\sigma$ is any permutation on $Z_{2}^{l}$ and $g:Z_{2}^{l} \rightarrow Z_{m}$ is any function. Consider the following function $f=f(x,y):Z_{2}^{t}\oplus Z_{2}^{t}=Z_{2}^{n} \rightarrow Z_{m},$ $$f(x,y)=g(y)+lx\cdot\sigma(y)\ \ \ \ (x,y\in Z_{2}^{t}).$$ It is easy to see that $f$ is a GBF.

(B) The conclusion (3) has been proved in \cite{SMGS} as a consequence of more general results. Here we give a direct simple proof.

Suppose that $f=f(x,x'):Z_{2}^{n}\oplus Z_{2}=Z_{2}^{n+1}\longrightarrow Z_{2} (x\in Z_{2}^{n},x'\in Z_{2})$ is a boolean bent function.
Then for $y\in Z_{2}^{n}$ and $y'\in Z_{2}$,
\begin{equation*}
\pm 2^{\frac{n+1}{2}}=W_f(y,y')=\sum_{x\in Z_{2}^{n},x'\in Z_{2}}(-1)^{f(x,x')+x\cdot y+x'y'}
\end{equation*}
\begin{equation}\label{equ:ex}
=\sum_{x\in Z_{2}^{n}}\left((-1)^{f(x,0)+x\cdot y}+(-1)^{f(x,1)+x\cdot y+y'}\right).
\end{equation}
For $a,b\in Z_{2}=\{0,1\}$ and $y\in Z_{2}^{n}$, let
$$N_{y}(a,b)=\mathop{\Large\sum}\limits_{\tiny{\begin{array}{c}x\in Z_{2}^{n}\\
f(x,0)=a,f(x,1)=b\end{array}}}(-1)^{x\cdot y},$$
then by (\ref{equ:ex}) we get
\begin{align*}
2^{\frac{n+1}{2}}\varepsilon=&\left(N_{y}(0,0)+N_{y}(0,1)-N_{y}(1,0)-N_{y}(1,1)\right)\\
&+(-1)^{y'}\left(N_{y}(0,0)+N_{y}(1,0)-N_{y}(0,1)-N_{y}(1,1)\right)
\end{align*}
where $\varepsilon\in \{1,-1\}$. Taking $y'=0$ and $1$ we get
\begin{equation}
2\left(N_{y}(0,0)-N_{y}(1,1)\right)=2\left(N_{y}(0,1)-N_{y}(1,0)\right)=2^{\frac{n+1}{2}}\varepsilon. \label{equ:ex2}
\end{equation}
Now we consider the following function $F:Z_{2}^{n} \longrightarrow Z_{4}=\{0,1,2,3\}$ defined by
$$F(x)=\left\{\begin{array}{ll}
0,&\mbox{if~$f(x,0)=f(x,1)=0$}\\
1,&\mbox{if~$f(x,0)=0,f(x,1)=1$}\\
2,&\mbox{if~$f(x,0)=f(x,1)=1$}\\
3,&\mbox{if~$f(x,0)=1,f(x,1)=0$},
\end{array}\right.$$
then for each $y\in Z_{2}^{n}$,
\begin{align*}
W_F(y)&=\sum_{x\in Z_{2}^{n}}(-1)^{x\cdot y}\zeta_{4}^{F(x)}\ \ \ \ (\zeta_{4}=\sqrt{-1})\\
&=N_{y}(0,0)+\zeta_{4}N_{y}(0,1)-N_{y}(1,1)-\zeta_{4}N_{y}(1,0)\\
&=\left(N_{y}(0,0)-N_{y}(1,1)\right)+\zeta_{4}\left(N_{y}(0,1)-N_{y}(1,0)\right).
\end{align*}
By (\ref{equ:ex2}), we get
\begin{align*}
|W_{F}(y)|^{2}&=\left(N_{y}(0,0)-N_{y}(1,1)\right)^{2}+\left(N_{y}(0,1)-N_{y}(1,0)\right)^{2}\\
&=2^{n-1}+2^{n-1}=2^n.
\end{align*}
Therefore $F$ is a GBF.

(C) For even $n\geq2$, there exists a $\{4,n\}$-type GBF by Lemma \ref{Lemmaexist}(2). For odd $n\geq1$, there exists a bent function $f:Z_{2}^{n+1}\longrightarrow Z_{2}$ by Lemma \ref{Lemmaexist}(1), then there exists a $\{4,n\}$-type GBF by Lemma \ref{Lemmaexist}(3). Therefore we have a $\{4,n\}$-type GBF for any positive integer $n$.
\end{remark}

Next we show the following simple observation.

\begin{lemma}\label{Lemmaobservation}
(1) If there exist $\{m,n\}$-type and $\{m,n'\}$-type GBFs, then there exists a $\{m,n+n'\}$-type GBF.

(2) If there exists a $\{m,n\}$-type GBF, then for any positive integer $l$ there exists a $\{lm,n\}$-type GBF. In other words, if there is no $\{m,n\}$-type GBF, then there is no $\{m',n\}$-type GBF for any factor $m'\geq2$ of $m$.
\end{lemma}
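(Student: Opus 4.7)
\medskip

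\noindent\textbf{Proof proposal.} Both parts of the lemma should follow from explicit constructions; there is no hard step, only a routine check using the definition (\ref{Wdefine}) of $W_f$.

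For part (1), the plan is to take the direct sum. Given a $\{m,n\}$-type GBF $f:Z_2^n\to Z_m$ and a $\{m,n'\}$-type GBF $g:Z_2^{n'}\to Z_m$, define
\[
h:Z_2^n\oplus Z_2^{n'}=Z_2^{n+n'}\longrightarrow Z_m,\qquad h(x,x')=f(x)+g(x')\pmod m.
\]
Writing a generic input to $W_h$ as $(y,y')\in Z_2^n\oplus Z_2^{n'}$, I would split the exponents $(x,x')\cdot(y,y')=x\cdot y+x'\cdot y'$ and factor the double sum to obtain
\[
W_h(y,y')=\sum_{x\in Z_2^n}(-1)^{x\cdot y}\zeta_m^{f(x)}\cdot\sum_{x'\in Z_2^{n'}}(-1)^{x'\cdot y'}\zeta_m^{g(x')}=W_f(y)\,W_g(y').
\]
Taking absolute values gives $|W_h(y,y')|=2^{n/2}\cdot 2^{n'/2}=2^{(n+n')/2}$, so $h$ is a $\{m,n+n'\}$-type GBF.

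For part (2), the plan is to lift $f$ to $Z_{lm}$ by multiplication. Given a $\{m,n\}$-type GBF $f:Z_2^n\to Z_m$, define
\[
F:Z_2^n\longrightarrow Z_{lm},\qquad F(x)=l\cdot f(x),
\]
which is well defined since $0\le lf(x)\le l(m-1)<lm$. The essential identity is $\zeta_{lm}^{\,l}=\zeta_m$, which gives $\zeta_{lm}^{F(x)}=\zeta_m^{f(x)}$ and hence
\[
W_F(y)=\sum_{x\in Z_2^n}(-1)^{x\cdot y}\zeta_{lm}^{F(x)}=\sum_{x\in Z_2^n}(-1)^{x\cdot y}\zeta_m^{f(x)}=W_f(y).
\]
Therefore $|W_F(y)|=2^{n/2}$ for every $y$, so $F$ is a $\{lm,n\}$-type GBF. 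The second sentence of (2) is just the contrapositive applied to $m=m'\cdot(m/m')$.

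The only point that needs a moment of care is the definedness of $F$ in part (2) and the identity $\zeta_{lm}^{\,l}=\zeta_m$; beyond that, the argument is a mechanical expansion, and I do not anticipate any genuine obstacle.
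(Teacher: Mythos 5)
Your proposal is correct and follows essentially the same route as the paper: the direct-sum construction $F(x,x')=f(x)+f'(x')$ with the factorization $W_F(y,y')=W_f(y)W_{f'}(y')$ for part (1), and the lift $F(x)=lf(x)$ with $\zeta_{lm}^{l}=\zeta_m$ for part (2). The only difference is that you write out the multiplicativity of the Fourier transform in (1), which the paper dismisses as "easy to see."
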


\begin{proof}
(1) Let $f:Z_{2}^{n} \rightarrow Z_{m}$ and $f':Z_{2}^{n'} \rightarrow Z_{m}$ be GBFs. It is easy to see that
$$F:Z_{2}^{n+n'}\rightarrow Z_{m}, F(x,x')=f(x)+f'(x')\qquad(x\in Z_{2}^{n}, x'\in Z_{2}^{n'})$$ is a GBF.

(2)Given a GBF $f:Z_{2}^{n} \rightarrow Z_{m}$, consider the following well-defined function
$$F:Z_{2}^{n}\rightarrow Z_{lm}, F(x)=lf(x)\qquad(x\in Z_{2}^{n}).$$
For each $y\in Z_{2}^{n}$,
$$W_F(y)=\sum_{x\in Z_{2}^{n}}(-1)^{x\cdot y}\zeta_{lm}^{F(x)}
=\sum_{x\in Z_{2}^{n}}(-1)^{x\cdot y}\zeta_{m}^{f(x)}=W_f(y).$$
Therefore, $F$ is a GBF if $f$ is a GBF.
\end{proof}

From Lemma \ref{Lemmaexist}(1), remark(C) and Lemma \ref{Lemmaobservation} we know that if both $n$ and $m$ are even or $4\mid m$, then there exists a GBF with type $\{m,n\}$. The remaining cases are

Case $(I)$\qquad $2\nmid m$, or

Case $(II)$\qquad $2 \nmid n$ and $m \equiv 2\pmod 4$.

So far there is no GBF being constructed in the above cases by our knowledge. In the following three sections we shall show several nonexistence results for these cases.

\section{Nonexistence Results (I)}\label{sec:3}
 In this section we present a nonexistence result on GBF by using the following lemma given by Lam and Leung \cite{LL}.

\begin{lemma}\cite{LL}\label{main theorem}
Let $m=p_{1}^{a_{1}}\cdots p_{s}^{a_{s}}$, where $p_{1},\ldots,p_{s}$ are distinct prime numbers and $a_{i}\geq1 (1\leq i\leq s)$. Let $S=\{\zeta_{m}^{i}:0\leq i\leq m-1\}$. Then for any  integer $n\geq1$, the equation
$x_{1}+\cdots+x_{n}=0$
has a solution $x_{i}\in S (1\leq i\leq n)$ if and only if $n=n_{1}p_{1}+\cdots+n_{s}p_{s}$ has a non-negative integer solution $(n_{1},\ldots,n_{s})$.
\end{lemma}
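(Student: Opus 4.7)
The plan is to prove the two directions of the stated equivalence separately, with the sufficiency being a direct construction and the necessity going by induction on the number $s$ of distinct prime divisors of $m$.

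For the ``if'' direction I would build a vanishing sum of length $n$ explicitly. For each prime $p_i \mid m$ the element $\omega_i := \zeta_m^{m/p_i}$ is a primitive $p_i$-th root of unity, so $\sum_{j=0}^{p_i-1}\omega_i^{j}=0$ exhibits a vanishing sum of $p_i$ elements of $S$, and multiplying through by any $\zeta_m^{c}\in S$ produces a shifted length-$p_i$ vanishing sum whose summands are still in $S$. Given $n=n_1p_1+\cdots+n_sp_s$ with $n_i\ge 0$, I would concatenate $n_i$ such length-$p_i$ vanishing sums, one group for each $i$, to obtain a vanishing sum of total length $n$.

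For the ``only if'' direction, the base case $s=1$ is $m=p^{a}$. Given $\sum_{j=1}^{n}\zeta_m^{e_j}=0$ with $0\le e_j<m$, form $f(x)=\sum_{j=1}^{n}x^{e_j}\in\mathbb{Z}[x]$. Since $\zeta_m$ is a root and $\Phi_{p^{a}}$ is its monic minimal polynomial over $\mathbb{Q}$, one has $\Phi_{p^{a}}(x)\mid f(x)$ in $\mathbb{Z}[x]$ by Gauss's lemma; write $f(x)=\Phi_{p^{a}}(x)g(x)$ with $g\in\mathbb{Z}[x]$. Evaluating at $x=1$ gives $n=f(1)=\Phi_{p^{a}}(1)g(1)=p\cdot g(1)$, so $p\mid n$ and the decomposition $n=(n/p)\cdot p$ is of the required form. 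For the inductive step, write $m=p^{a}m'$ with $p$ prime and $\gcd(p,m')=1$, so $m'$ has $s-1$ distinct prime divisors. Using the factorization of $m$-th roots as products $\omega_j\eta_j$ with $\omega_j$ a $p^{a}$-th root and $\eta_j$ an $m'$-th root, the relation $\sum_j\omega_j\eta_j=0$ lives in $\mathbb{Q}(\zeta_m)=\mathbb{Q}(\zeta_{m'})(\zeta_{p^{a}})$. Expanding in the power basis $\{\zeta_{p^{a}}^{i}:0\le i<\varphi(p^{a})\}$ for the top extension, and reducing terms with exponent $\ge\varphi(p^{a})$ via $\Phi_{p^{a}}(\zeta_{p^{a}})=0$, one extracts $\varphi(p^{a})$ separate relations in $\mathbb{Z}[\zeta_{m'}]$. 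Each such relation is (after sign bookkeeping) a vanishing sum of $m'$-th roots of unity to which the inductive hypothesis applies, giving block lengths that are non-negative integer combinations of primes dividing $m'$. Combining these block lengths with the additional groups of size $p$ absorbed each time the relation $\Phi_{p^{a}}(\zeta_{p^{a}})=0$ was invoked assembles the required global decomposition $n=n_1p_1+\cdots+n_sp_s$.

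The main obstacle will be the sign bookkeeping in the inductive step. Reducing $\zeta_{p^{a}}^{k}$ for $k\ge\varphi(p^{a})$ via $\Phi_{p^{a}}$ introduces coefficients $-1$, so the intermediate $\mathbb{Z}[\zeta_{m'}]$-relations are \emph{signed} $\mathbb{Z}$-linear combinations of $m'$-th roots rather than honest vanishing sums of $m'$-th roots. Converting each signed relation into an actual vanishing sum of the right length, in a way that records each use of $\Phi_{p^{a}}$ as an extra length-$p$ block, is the combinatorial heart of the argument and the place where one must be careful to keep all $n_i\ge 0$. This delicate accounting is exactly what Lam and Leung carry out in \cite{LL}.
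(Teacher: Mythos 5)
First, note that the paper offers no proof of this lemma at all: it is quoted directly from Lam and Leung \cite{LL}. So the only question is whether your blind attempt itself constitutes a proof. Your ``if'' direction and the base case $m=p^{a}$ of the ``only if'' direction are correct and standard (in the base case you do not even need Gauss's lemma: $\Phi_{p^{a}}$ is monic, so the quotient is automatically in $\mathbb{Z}[x]$, and $\Phi_{p^{a}}(1)=p$ gives $p\mid n$). The inductive step, however, contains a genuine gap, and it is not one that ``delicate accounting'' can close within the framework you set up. Writing $m=p^{a}m'$ and grouping the summands as $\sum_{u=0}^{p^{a}-1}c_{u}\zeta_{p^{a}}^{u}=0$, where each $c_{u}$ is a non-negative integer combination of $m'$-th roots of unity, the relations you extract are (since the $\mathbb{Q}(\zeta_{m'})$-linear relations among $1,\zeta_{p^{a}},\dots,\zeta_{p^{a}}^{p^{a}-1}$ are spanned by the $p^{a-1}$ disjointly supported relations $\sum_{j=0}^{p-1}\zeta_{p^{a}}^{jp^{a-1}+r}=0$) the equalities $c_{r}=c_{p^{a-1}+r}=\cdots=c_{(p-1)p^{a-1}+r}$ in $\mathbb{Z}[\zeta_{m'}]$. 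These are \emph{not} vanishing sums of $m'$-th roots of unity; they are equalities between two non-negative sums, i.e.\ signed relations, and the inductive hypothesis says nothing about them. Worse, when $m'$ has at least two prime factors, two equal non-negative sums of $m'$-th roots of unity can have lengths differing by $1$: from $\Phi_{15}(\zeta_{15})=0$ one gets $\zeta_{15}^{8}+\zeta_{15}^{5}+\zeta_{15}^{3}+1=\zeta_{15}^{7}+\zeta_{15}^{4}+\zeta_{15}$, a length-$4$ and a length-$3$ representation of the same element (this reflects $\Phi_{m'}(1)=1$ for $m'$ not a prime power). Consequently no congruence on block lengths survives, and the individual block lengths need not lie in $\mathbb{N}p_{1}+\cdots+\mathbb{N}p_{s-1}$ either. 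A concrete instance where your bookkeeping has nothing to grab onto is the vanishing sum of $22$ roots of unity of order dividing $105$ built from $c_{0}=\zeta_{15}^{8}+\zeta_{15}^{5}+\zeta_{15}^{3}+1$ and $c_{1}=\cdots=c_{6}=\zeta_{15}^{7}+\zeta_{15}^{4}+\zeta_{15}$: the blocks have lengths $4,3,3,3,3,3,3$, none of them is itself a vanishing sum, and $4\notin 3\mathbb{N}+5\mathbb{N}$.

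Your closing sentence, that the sign bookkeeping ``is exactly what Lam and Leung carry out,'' therefore amounts to deferring the entire content of the hard direction to the cited paper: that direction \emph{is} the theorem, and Lam and Leung's argument is not a patched version of your power-basis expansion --- they work in the integral group ring and prove stronger structural statements precisely because the naive length bookkeeping above fails once $m$ has three or more prime factors. Since the paper itself simply cites \cite{LL}, the honest options are to do the same or to reproduce their argument in full; as written, your inductive step cannot be completed.
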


\begin{theorem}\label{usemain}
Let $m=p_{1}^{a_{1}}\cdots p_{s}^{a_{s}}$, where $p_{1},\ldots,p_{s}$ are distinct odd prime numbers and $a_{i}\geq1 (1\leq i\leq s)$.
If
$$2^{n}=n_{1}p_{1}+\cdots+n_{s}p_{s}$$
has no non-negative integer solution $(n_{1},\ldots,n_{s})$, then there is no GBF with type $\{m,n\}$.
\end{theorem}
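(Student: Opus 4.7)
The plan is to prove the contrapositive: assuming a $\{m,n\}$-type GBF $f$ exists, I will produce a vanishing sum of exactly $2^n$ $m$-th roots of unity, and then apply Lemma~\ref{main theorem} to conclude $2^n = n_1 p_1 + \cdots + n_s p_s$ for non-negative integers $n_i$, contradicting the hypothesis. Since every $p_i$ is odd, $m$ is odd, so $\mathbb{Q}(\zeta_{2m}) = \mathbb{Q}(\zeta_m)$, and hence $W_f(y) \in \mathbb{Z}[\zeta_m]$ and the GBF identity $W_f(y)\overline{W_f(y)} = 2^n$ takes place in this ring.

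The crucial step is to expand this identity using the change of variable $z = x+x' \in Z_2^n$ (addition is XOR). Setting $S_z := \sum_{x \in Z_2^n}\zeta_m^{f(x)-f(x+z)}$, a direct computation gives
$$W_f(y)\overline{W_f(y)} \;=\; \sum_{x,x' \in Z_2^n}(-1)^{(x+x')\cdot y}\,\zeta_m^{f(x)-f(x')} \;=\; \sum_{z \in Z_2^n}(-1)^{z\cdot y}\,S_z.$$
Since $S_0 = 2^n$, the GBF condition reduces to $\sum_{z\neq 0}(-1)^{z\cdot y}S_z = 0$ for every $y \in Z_2^n$. This is a system of $2^n$ linear equations in the $2^n-1$ unknowns $(S_z)_{z\neq 0}$; its $2^n \times (2^n-1)$ coefficient matrix $M$ with $M_{y,z} = (-1)^{z\cdot y}$ satisfies $M^{\mathsf T}M = 2^n I$ by orthogonality of the characters $y \mapsto (-1)^{z\cdot y}$ on $Z_2^n$. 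Hence $M$ has full column rank, and the system has only the trivial solution $S_z = 0$ for every $z \neq 0$.

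Now for any fixed $z \neq 0$, the relation $S_z = \sum_{x \in Z_2^n}\zeta_m^{f(x)-f(x+z)} = 0$ exhibits a vanishing sum of exactly $2^n$ $m$-th roots of unity. Invoking Lemma~\ref{main theorem}, one concludes that $2^n = n_1 p_1 + \cdots + n_s p_s$ admits a non-negative integer solution, contradicting the hypothesis and completing the proof.

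The step I expect to be the main obstacle is simply spotting the right identity: the substitution $z = x+x'$ reorganises $|W_f(y)|^2$ into the $Z_2^n$-Fourier expansion $\sum_z (-1)^{z\cdot y}S_z$ of the "autocorrelations" $S_z$, and the GBF hypothesis holding uniformly in $y$ is precisely what one needs to invert this Fourier transform and conclude $S_z = 0$ for all nonzero $z$. Once this structural observation is in place, the Lam--Leung lemma closes the argument immediately, and the routine verification that $M^{\mathsf T}M = 2^n I$ presents no further difficulty.
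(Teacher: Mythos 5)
Your proposal is correct and follows essentially the same route as the paper: both arguments show that the autocorrelation sum $\sum_{x}\zeta_m^{f(x)-f(x+z)}$ vanishes for every $z\neq 0$ (the paper via the Fourier inversion formula, you via expanding $\lvert W_f(y)\rvert^2$ and character orthogonality, which is the same computation read in the other direction) and then invoke the Lam--Leung lemma on this vanishing sum of $2^n$ $m$-th roots of unity.
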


\begin{proof}
Suppose that $f:Z_{2}^{n} \rightarrow Z_{m}$ is a GBF. Then $\big|W_{f}(y)\big|^{2}=2^{n}$ for any $y\in Z_{2}^{n}$. By Fourier inverse transformation (\ref{WIdefine}), for $0\neq a \in Z_{2}^{n}$, we have
 \begin{align*}
 \sum_{x\in Z_{2}^{n}}\zeta_{m}^{f(x+a)-f(x)}&=\frac{1}{2^{2n}}\sum_{x\in Z_{2}^{n}}\sum_{y\in Z_{2}^{n}}W_{f}(y)(-1)^{(x+a)\cdot y}\sum_{z\in Z_{2}^{n}}\overline{W_{f}(z)}(-1)^{x\cdot z}\\
&=\frac{1}{2^{2n}}\sum_{y,z\in Z_{2}^{n}}W_{f}(y)\overline{W_{f}(z)}(-1)^{a\cdot y}\sum_{x\in Z_{2}^{n}}(-1)^{x\cdot (y+z)}\\
&=\frac{1}{2^{n}}\sum_{y\in Z_{2}^{n}}W_{f}(y)\overline{W_{f}(y)}(-1)^{a\cdot y}=\frac{1}{2^{n}}\sum_{y\in Z_{2}^{n}}\big|W_{f}(y)\big|^{2}(-1)^{a\cdot y}\\
&=\sum_{y\in Z_{2}^{n}}(-1)^{a\cdot y}=0\qquad (\mbox{since\ } 0\neq a \in Z_{2}^{n}).
\end{align*}
The left-hand side is a sum of $2^{n}$ elements and each element is a power of $\zeta_{m}$. By Lemma \ref{main theorem}, the equation $2^{n}=n_{1}p_{1}+\cdots+n_{s}p_{s}$ has a non-negative integer solution $(n_{1},\ldots,n_{s})$. This completes the proof of Theorem \ref{usemain}.
\end{proof}

\begin{corollary}\label{maincor}
There is no GBF with type $\{m,n\}$ provided one of the following conditions on $n$ and $m$  is satisfied.

(1) $m$ is an odd prime power and $n$ is any positive integer;

(2) $m=p_{1}^{a_{1}}\cdots p_{s}^{a_{s}}$, $3\leq p_{1}<p_{2}<\cdots< p_{s}$ and $2^{n}<p_{1}+p_{2}$;

(3) $m=p_{1}^{a_{1}}p_{2}^{a_{2}}, p_{1}\neq p_{2}, a_{1},a_{2}\geq1, 2^{n}=p_{1}p_{2}-m_{1}p_{1}-m_{2}p_{2}$ for some positive integers $m_{1}$ and $m_{2}$.
\end{corollary}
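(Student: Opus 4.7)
The plan is to apply Theorem \ref{usemain} in each of the three cases and thereby reduce everything to a purely number-theoretic question: does the equation
\[
2^n \;=\; n_1 p_1 + \cdots + n_s p_s
\]
admit a solution in non-negative integers $(n_1,\ldots,n_s)$? In each part I would show this equation has no such solution, whence nonexistence of the GBF follows.

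For part (1), only one odd prime $p$ is involved, and the equation becomes $2^n = n_1 p$. Since $p$ is odd, $p \nmid 2^n$, so no non-negative $n_1$ works unless $2^n = 0$, which is impossible. This is essentially immediate. For part (2), suppose a solution $(n_1,\ldots,n_s)$ existed. If only one index $i$ has $n_i > 0$, then $p_i \mid 2^n$, again contradicting the oddness of $p_i$. So at least two indices, say $i<j$, have $n_i,n_j \geq 1$; then
\[
2^n \;=\; \sum_{k} n_k p_k \;\geq\; p_i + p_j \;\geq\; p_1 + p_2,
\]
contradicting the hypothesis $2^n < p_1 + p_2$. So no solution exists, and Theorem \ref{usemain} applies.

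Part (3) is the most substantive — it is a small Frobenius-type argument. Assume for contradiction that $2^n = n_1 p_1 + n_2 p_2$ with $n_1,n_2 \geq 0$. Combining with the hypothesis $2^n = p_1 p_2 - m_1 p_1 - m_2 p_2$, one gets
\[
p_1 p_2 \;=\; (n_1 + m_1)\, p_1 \;+\; (n_2 + m_2)\, p_2,
\]
where $n_1+m_1 \geq m_1 \geq 1$ and $n_2+m_2 \geq m_2 \geq 1$. Rearranging yields $p_2 \mid (n_1+m_1)\,p_1$, and since $\gcd(p_1,p_2)=1$, we conclude $n_1+m_1 \geq p_2$; symmetrically, $n_2+m_2 \geq p_1$. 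But then the right-hand side is at least $p_2\cdot p_1 + p_1\cdot p_2 = 2p_1 p_2 > p_1 p_2$, a contradiction. Hence no non-negative integer solution exists and Theorem \ref{usemain} gives the nonexistence of a $\{m,n\}$-type GBF.

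The only step requiring any genuine thought is part (3), where one must observe that writing $2^n$ as a non-negative combination of $p_1,p_2$ together with the given representation of $2^n$ produces two complementary representations of $p_1 p_2$ that, by coprimality, force both coefficients up to a size that exceeds $p_1 p_2$ itself. Parts (1) and (2) are direct divisibility/size checks once Theorem \ref{usemain} is in hand.
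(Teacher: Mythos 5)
Your proof is correct and follows the paper's route: all three parts are reduced via Theorem \ref{usemain} to the non-solvability of $2^{n}=n_{1}p_{1}+\cdots+n_{s}p_{s}$ in non-negative integers, with (1) and (2) argued exactly as in the paper. The only difference is in (3), where the paper quotes the Sylvester--Frobenius fact that $p_{1}p_{2}-p_{1}-p_{2}$ is the largest integer not representable as a non-negative combination of $p_{1}$ and $p_{2}$, whereas you prove the needed instance directly via the congruences $p_{2}\mid(n_{1}+m_{1})$ and $p_{1}\mid(n_{2}+m_{2})$; both arguments are sound.
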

\begin{proof}
This is a direct consequence of Theorem \ref{usemain} since

for (1), $2^{n}=px (2\nmid p\geq3)$ has no solution $x\in \mathbb{Z}$;

for (2), if $2^{n}=n_{1}p_{1}+\cdots+n_{s}p_{s} (n_{i}\geq 0)$, then $n_{i}\geq 1$ for at least two $i$. Therefore $2^{n}\geq p_{1}+p_{2}$;

for (3), if $p_{1}p_{2}-m_{1}p_{1}-m_{2}p_{2}=2^{n}=n_{1}p_{1}+n_{2}p_{2}$, then $p_{1}p_{2}-p_{1}-p_{2}=p_{1}(n_{1}+m_{1}-1)+p_{2}(n_{2}+m_{2}-1)$ which contradicts to the well-known fact that $p_{1}p_{2}-p_{1}-p_{2}$ is the largest integer $N$ such that $N=p_{1}x_{1}+p_{2}x_{2}$ has no non-negative integer solution $(x_{1},x_{2})$. Therefore, $2^{n}=n_{1}p_{1}+n_{2}p_{2}$ has no non-negative integer solution $(n_{1},n_{2})$.
\end{proof}

\begin{example}
By Corollary \ref{maincor} it can be computed that there is no $\{7^{a_{1}}13^{a_{2}},n\}$- type GBF for all $a_{1},a_{2}\geq1$ and $1\leq n\leq 6$.
\end{example}

\section{Nonexistence Results (II): Semiprimitive Case}\label{sec:4}
In this and next sections we consider the case $2 \nmid n$ and $2\nmid m$ or $m \equiv 2\pmod 4$. Firstly we introduce some basic facts on cyclotomic number field $\mathbb{Q}(\zeta_{m})$ where $\mathbb{Q}$ is the field of rational numbers. We refer to books \cite{IR} and \cite{W} for details. If $m=2t, 2\nmid t$, we have $\zeta_{t}=\zeta_{m}^{2}$ and $-\zeta_{m}=\zeta_{m}^{\frac{m}{2}+1}=\zeta_{2t}^{t+1}=\zeta_{t}^{\frac{t+1}{2}}$. Therefore,  $\mathbb{Q}(\zeta_{m})=\mathbb{Q}(\zeta_{t})$ and we can assume $K=\mathbb{Q}(\zeta_{m})$ where $m$ is odd.

\vspace{2 ex}
FACT(1). For the cyclotomic field $K=\mathbb{Q}(\zeta_{m})$ where $m\geq 3$ and $m\not\equiv 2\pmod 4$, $K/\mathbb{Q}$ is a Galois extension with degree $[K:\mathbb{Q}]=\varphi(m)$ where $\varphi(m)$ is the Euler (totient) function. The Galois group of
$K/\mathbb{Q}$ is
$$Gal(K/\mathbb{Q})=\{\sigma_{a}:a\in Z_{m}^{\ast}\}$$ where $Z_{m}^{\ast}=\{1\leq a \leq m-1: (a,m)=1\}$ is the multiplicative group of units in the ring $Z_{m}, |Z_{m}^{\ast}|=\varphi(m),$ and the automorphism $\sigma_{a}$ of $K$ is defined by $\sigma_{a}(\zeta_{m})=\zeta_{m}^{a}$. We have $\sigma_{b}=\sigma_{ab}$ for $a,b\in Z_{m}^{\ast}$. Therefore we have the following isomorphism of groups
$$Z_{m}^{\ast}\cong Gal(K/\mathbb{Q}), \qquad a\mapsto \sigma_{a}.$$

\vspace{2 ex}
FACT(2). Let $L$ be any (algebraic) number field which means that $L/\mathbb{Q}$ is a finite extension. An element $\alpha \in L$ is called an (algebraic) integer in $L$ if there exists a monic polynomial $f(x)\in \mathbb{Z}[x]$ such that $f(\alpha)=0$. The set of all integers in $L$ is a (commutative) ring, called the ring of integers in $L$ and denoted by $\mathcal{O}_{L}$. For $K=\mathbb{Q}(\zeta_{m})$ we have $$\mathcal{O}_{K}=\mathbb{Z}[\zeta_{m}]=\mathbb{Z}\oplus\zeta_{m}\mathbb{Z}\oplus\cdots \oplus\zeta_{m}^{\varphi(m)-1}\mathbb{Z}.$$
Namely, each integer $\alpha\in \mathcal{O}_{K}$ can be uniquely expressed as $$\alpha=a_0+a_1\zeta_{m}+\cdots+a_{\varphi(m)-1}\zeta_{m}^{\varphi(m)-1}\quad (a_{i} \in \mathbb{Z}).$$
We denote the group of units in $\mathcal{O}_{L}$ by $U_{L}$, and the group of roots of 1 in $\mathcal{O}_{L}$ by $W_{L}$. Then $W_{L}\subseteq U_{L}$, and for $K=\mathbb{Q}(\zeta_{m}) (2\nmid m)$ $W_{K}=\langle \zeta_{2m}\rangle = \{\zeta_{2m}^{i}: 0\leq i\leq 2m-1\}$ and for $\alpha\in\mathcal{O}_{K}$, we have $\alpha\in W_{K}$ if and only if $|\sigma(\alpha)|=1$ for each $\sigma\in Gal(K/\mathbb{Q})$.

\vspace{2 ex}
FACT(3). Let $L$ be a number field. Any idea $A\neq (0)$ of $\mathcal{O}_{L}$ can be expressed as $A=P_{1}P_{2}\cdots P_{s}$ where $P_{1},P_{2},\ldots, P_{s}$ are (non-zero) prime ideals of $\mathcal{O}_{L}$. The expression is unique up to the order of $P_{i} (1\leq i \leq s)$. Therefore $A$ can be uniquely (up to the order) expressed as $$A=P_{1}^{a_{1}}\cdots P_{g}^{a_{g}}$$ where $P_{1},\ldots, P_{s}$ are distinct (non-zero) prime ideals of $\mathcal{O}_{L}$ and $a_{i}\geq1 (1\leq i \leq g)$. In other words, the set $S(L)$ of all non-zero ideals of $\mathcal{O}_{L}$ is a free multiplicative commutative semigroup with a basis $B(L)$, the set of all non-zero prime ideals of $\mathcal{O}_{L}$. Such semigroup $S(L)$ can be extended to the (commutative) group $I(L)$, called the group of fractional ideals of $L$. Each element (called a fractional ideal) of $I(L)$ has the form $AB^{-1}$ where $A,B$ are ideals of $\mathcal{O}_{L}$. For each $\alpha\in L^{\ast}=L\setminus \{0\}$, $\alpha\mathcal{O}_{L}$ is a fractional ideal, called a principal fractional ideal. From $(\alpha\mathcal{O}_{L})(\beta\mathcal{O}_{L})=\alpha\beta\mathcal{O}_{L}$ and $(\alpha\mathcal{O}_{L})^{-1}=\alpha^{-1}\mathcal{O}_{L}$ we know that the set $P(L)$ of all principal fractional ideals is a subfield of $I(L)$.

\vspace{2 ex}
FACT(4). The group $C(L)=\frac{I(L)}{P(L)}$ is finite and called the (ideal) class group of $L$. $h(L)=|C(L)|$ is called the class number of $L$. An element $[A]$ of $C(L)$ is called the ideal class of a fractional ideal $A$. Therefore $[A]^{h(L)}=1$ for any ideal class $[A]$ in $C(L)$.

\vspace{2 ex}
FACT(5). Let $K=\mathbb{Q}(\zeta_{m})$ where $2\nmid m\geq 3$, and $G=Gal(K/\mathbb{Q})=\{\sigma_{a}:a\in Z_{m}^{\ast}\}$. Let $H=\langle \sigma_{2} \rangle$ be the cyclic subgroup of $G$ generated by $\sigma_{2}$ and $M$ be the subfield of $K$ corresponding to $H$ by Galois Theory. Namely, for $\alpha\in K$, we have $\alpha\in M$ if and only if $\sigma_{2}(\alpha)=\alpha$. $H$ and $M$ are called the decomposition group and decomposition field of 2 in $K$, respectively. We have $$[K:M]=|H|=f, [M:\mathbb{Q}]=[G:H]=g=\frac{\varphi(m)}{f}$$ where $[K:M]=dim_{M}K,$ the dimension of $K$ as a vector space over $M$, $[G:H]=\big|G/H\big|$ and $f$ is order of $2$ in $\mathbb{Z}_{m}^{\ast}$, that is, $f$ is the least positive integer such that $2^{f} \equiv 1\pmod {m}$.

$2\mathcal{O}_{M}$ is an ideal of $\mathcal{O}_{M}$ and $$2\mathcal{O}_{M}=\mathfrak{p}_{1}\mathfrak{p}_{2}\cdots \mathfrak{p}_{g}$$ where $\mathfrak{p}_{i} (1\leq i \leq g)$ are distinct prime ideals of $\mathcal{O}_{M}$. For each $i (1\leq i \leq g),$ $\mathfrak{p}_{i}\mathcal{O}_{K}=P_{i}$ is a prime ideal of $\mathcal{O}_{K}$. Therefore we have the decomposition of 2 in $\mathcal{O}_{K}$ as $$2\mathcal{O}_{K}=P_{1}P_{2}\cdots P_{g}$$ where $P_{1},\ldots, P_{g}$ are distinct prime ideals of $\mathcal{O}_{K}$. From $\mathfrak{p}_{i}\subseteq \mathcal{O}_{M}$ we know that $\sigma_{2}(P_{i})=\sigma_{2}(\mathfrak{p}_{i}\mathcal{O}_{K})=\sigma_{2}(\mathfrak{p}_{i})\mathcal{O}_{K}=\mathfrak{p}_{i}\mathcal{O}_{K}=P_{i} \quad(1\leq i \leq g)$.

\vspace{2 ex}
FACT(6). (decomposition law in imaginary quadratic fields) Let $d=p_{1}\cdots p_{s}$ where $p_{1}, \ldots, p_{s}$ are distinct odd prime numbers, $s\geq1$, $K=\mathbb{Q}(\sqrt{-d})$. Then

(I). $\mathcal{O}_{K}=\{A+B\sqrt{-d}: A,B\in \mathbb{Z}\}$, if $d \equiv 1\pmod 4$,

$\mathcal{O}_{K}=\{\frac{1}{2}(A+B\sqrt{-d}): A,B\in \mathbb{Z}, A\equiv B\pmod 2\}$, if $d \equiv 3\pmod 4$;

(II).
\begin{gather*}
2\mathcal{O}_{K}=\left\{\begin{array}{llllll}
P^{2}, &if d \equiv 1\pmod 4\\
P\bar{P}, P\neq \bar{P}, &if d \equiv 7\pmod 8\\
P,&if d \equiv 3\pmod 8;
\end{array}\right.
\end{gather*}

(III). For any odd prime number $p$,
\begin{gather*}
p\mathcal{O}_{K}=\left\{\begin{array}{llllll}
P^{2}, &if p|d\\
P\bar{P}, P\neq \bar{P}, &if p\nmid d \mbox{ and} \big(\frac{-d}{p}\big)=1\\
P,&if p\nmid d \mbox{ and} \big(\frac{-d}{p}\big)=-1
\end{array}\right.
\end{gather*}
where $P$ denotes a prime ideal of $\mathcal{O}_{K}$ and $\bar{P}=\{\bar{\alpha}:\alpha \in P\}$.

After above preparation, we prove the following result.

\begin{theorem}\label{thm:self}
Let $m$ and $n$ be odd positive integers. Assume that the following semiprimitive condition is satisfied.

$(\ast)$ there exists a positive integer $l$ such that
\begin{align}
2^{l}\equiv-1\pmod m. \label{condition}
\end{align}
Then there is no GBF with type $\{m,n\}$ and $\{2m,n\}$ for any odd integer $n\geq1$.
\end{theorem}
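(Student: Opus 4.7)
The plan is to work in the ring of integers $\mathcal{O}_K=\mathbb{Z}[\zeta_m]$ of $K=\mathbb{Q}(\zeta_m)$; since $m$ is odd, the paragraph preceding FACT(1) gives $\mathbb{Z}[\zeta_{2m}]=\mathbb{Z}[\zeta_m]$, so one argument will handle both types $\{m,n\}$ and $\{2m,n\}$ simultaneously. Suppose for contradiction that $f$ is such a GBF, fix any $y\in Z_2^n$, and set $\alpha=W_f(y)\in\mathcal{O}_K$. Then $\alpha\bar\alpha=|W_f(y)|^2=2^n\neq 0$, so as ideals in $\mathcal{O}_K$,
$$(\alpha)(\bar\alpha)=(2)^n.$$

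The next step exploits FACT(5): $2\mathcal{O}_K=P_1\cdots P_g$ with distinct primes $P_i$ each fixed by the decomposition group $H=\langle\sigma_2\rangle$. The hypothesis $2^l\equiv -1\pmod m$ says $\sigma_{-1}=\sigma_2^l\in H$, and complex conjugation on $K$ is precisely $\sigma_{-1}$; hence $\bar P_i=\sigma_{-1}(P_i)=P_i$ for every $i$. This self-conjugacy of each prime above $2$ is the crucial consequence of the semiprimitive condition.

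Now factor the principal fractional ideal as $(\alpha)=P_1^{a_1}\cdots P_g^{a_g}\,Q$ with $Q$ coprime to $2\mathcal{O}_K$ (FACT(3)). Applying complex conjugation and using $\bar P_i=P_i$ yields $(\bar\alpha)=P_1^{a_1}\cdots P_g^{a_g}\,\bar Q$, so
$$P_1^{2a_1}\cdots P_g^{2a_g}\,Q\bar Q=(2)^n=P_1^n\cdots P_g^n.$$
Since $Q\bar Q$ is still coprime to $2\mathcal{O}_K$ while the right side is supported only on the $P_i$, unique factorization forces $Q\bar Q=(1)$ and $2a_i=n$ for every $i$. This is impossible with $n$ odd, which is the desired contradiction.

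I do not expect a serious obstacle: the three ingredients—the norm equation $\alpha\bar\alpha=2^n$ inside $\mathcal{O}_K$, the self-conjugacy $\bar P_i=P_i$ extracted from the semiprimitive assumption, and unique factorization of fractional ideals—dovetail directly to produce a parity obstruction modulo $2$. The only delicate point is to run the argument in $\mathcal{O}_K$ rather than in $\mathcal{O}_M$: in $\mathcal{O}_M$ one already has $\bar{\mathfrak p}_i=\mathfrak p_i$ for free (since $\sigma_{-1}\in H$ means $M$ is real), but the exponent of $\mathfrak p_i$ in $(2\mathcal{O}_M)$ is only $1$ per prime, so it is the factorization upstairs in $\mathcal{O}_K$ together with $\bar P_i=P_i$ that yields the ``$2a_i=n$'' equation responsible for the contradiction.
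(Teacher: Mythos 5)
Your argument is correct and is essentially the paper's own proof: both rest on the observation that the semiprimitive condition puts $\sigma_{-1}$ in the decomposition group $\langle\sigma_2\rangle$, so each prime $P_i$ above $2$ satisfies $\bar P_i=P_i$, forcing every exponent in the factorization of $(\alpha\bar\alpha)=(2)^n$ to be even and contradicting $2\nmid n$. The only cosmetic differences are that the paper takes $\alpha=W_f(0)$, deduces the $\{m,n\}$ case from the $\{2m,n\}$ case via Lemma~\ref{Lemmaobservation}(2) rather than running both at once, and asserts directly that $(\alpha)$ is supported on the $P_i$ (from $\alpha\mid 2^n$) where you carry along a cofactor $Q$ and then eliminate it.
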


\begin{proof}
Suppose that there exists a GBF $f:Z_{2}^{n} \rightarrow Z_{2m}$, then
$\alpha=W_f(0)=\sum_{x\in Z_{2}^{n}}\zeta_{2m}^{f(x)}\in\mathcal{O}_K$
where $K=\mathbb{Q}(\zeta_{2m})=\mathbb{Q}(\zeta_{m}), \mathcal{O}_K=\mathbb{Z}[\zeta_{m}]$ by FACT(2), and $\alpha\bar{\alpha}=\big|W_f(0)\big|^{2}=2^{n}$ since $f$ is a GBF. Let $l$ be the least positive integer satisfying (\ref{condition}). Then the order of 2 in $Z_{m}^{\ast}$ is $f=2l$ and, by FACT(5), we have
$$2\mathcal{O}_{K}=P_{1}\cdots P_{g}, g=\frac{\varphi(m)}{f}$$ where
$P_{i} (1\leq i\leq g)$ are distinct prime ideals of $\mathcal{O}_{K}$, and
\begin{align}
2^{n}\mathcal{O}_{K}=(P_{1}\cdots P_{g})^{n}.\label{ideal}
\end{align}
On the other hand, from $\alpha\mid \alpha\bar{\alpha}=2^{n}$ where $\bar{\alpha}$ is the complex conjugate number of $\alpha$, we have $$\alpha\mathcal{O}_{K}=P_{1}^{a_{1}}\cdots P_{g}^{a_{g}} \quad(a_{i}\geq0).$$ By FACT(5), we know that $\sigma_{2}(P_{i})=P_{i}$ so that $\sigma_{-1}(P_{i})=\sigma_{2^{l}}(P_{i})=P_{i} (1\leq i\leq g)$. Therefore
$$\sigma_{-1}(\alpha)\mathcal{O}_{K}=\sigma_{-1}(P_{1})^{a_{1}}\cdots \sigma_{-1}(P_{g})^{a_{g}}=P_{1}^{a_{1}}\cdots P_{g}^{a_{g}}=\alpha\mathcal{O}_{K}.$$
But $\alpha=\sum_{i=0}^{\varphi(m)-1}a_{i}\zeta_{m}^{i} (a_{i}\in \mathbb{Z})$ by FACT(2), so we get
$$\sigma_{-1}(\alpha)=\sum_{i=0}^{\varphi(m)-1}a_{i}(\zeta_{m}^{-1})^{i}=\bar{\alpha}.$$
Therefore
\begin{equation}\label{ideal2}
\alpha\bar{\alpha}\mathcal{O}_{K}=(\alpha\mathcal{O}_{K})(\sigma_{-1}(\alpha)\mathcal{O}_{K})=P_{1}^{2a_{1}}\cdots P_{g}^{2a_{g}}.
\end{equation}
From uniqueness of the decompositions (\ref{ideal}) and (\ref{ideal2}) of $\alpha\bar{\alpha}=2^{n}$, we have $2a_{1}=n$ which contradicts to $2\nmid n$.
Therefore there is no GBF with type $\{2m,n\}$ (and type $\{m,n\}$ by Lemma \ref{Lemmaobservation}(2)) for any odd integer $n\geq1$. This completes the proof of Theorem \ref{thm:self}.
\end{proof}

Now we present an explicit description on odd integers $m\geq3$ satisfying the semiprimitive condition (\ref{condition}) (for $m=1$, this condition is true automatically). For each integer $d\neq 0$, let $d=2^{r}d'$ where $r\geq0$ and $2\nmid d'.$ We denote $r$ by $V_{2}(d)$.

\begin{theorem}\label{thm:condition}
Let $m=p_{1}^{a_{1}}\cdots p_{s}^{a_{s}} (s\geq1)$ where $p_{i} (1\leq i \leq s)$ are distinct odd prime numbers. Let $d_{i}$ be the order of $2$ in $Z_{p_{i}}^{\ast}$. Then there exists $l\geq 1$ such that $2^{l}\equiv-1\pmod m$ if and only if $V_{2}(d_{i}) (1\leq i \leq s)$ are the same integer $r\geq 1$.
\end{theorem}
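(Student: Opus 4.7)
\medskip

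The plan is to split the congruence $2^{l}\equiv -1\pmod m$ by the Chinese Remainder Theorem and analyze each prime-power factor separately. Let $e_{i}$ denote the order of $2$ in $Z_{p_{i}^{a_{i}}}^{\ast}$. Then $2^{l}\equiv-1\pmod m$ holds for some $l\geq 1$ if and only if, for each $i$, we have $2^{l}\equiv-1\pmod{p_{i}^{a_{i}}}$, and this in turn is equivalent to saying that $-1$ lies in $\langle 2\rangle\subseteq Z_{p_{i}^{a_{i}}}^{\ast}$, i.e.\ $e_{i}$ is even and $l\equiv e_{i}/2\pmod{e_{i}}$ (since in a cyclic group of even order there is a unique element of order $2$, namely the square root of the identity at exponent $e_{i}/2$).

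The crucial technical step is to show $V_{2}(e_{i})=V_{2}(d_{i})$. I would prove this by using that for an odd prime $p_{i}$, the group $(\mathbb{Z}/p_{i}^{a_{i}}\mathbb{Z})^{\ast}$ is cyclic of order $p_{i}^{a_{i}-1}(p_{i}-1)$, and the natural reduction map onto $(\mathbb{Z}/p_{i}\mathbb{Z})^{\ast}$ has kernel of order $p_{i}^{a_{i}-1}$. A standard lifting argument then gives $e_{i}=d_{i}\cdot p_{i}^{j_{i}}$ for some $0\leq j_{i}\leq a_{i}-1$, and since $p_{i}$ is odd this yields $V_{2}(e_{i})=V_{2}(d_{i})$. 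This is the one ingredient where care is needed, but it is a standard consequence of the structure of $(\mathbb{Z}/p_{i}^{a_{i}}\mathbb{Z})^{\ast}$.

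For the necessity direction, if $l$ exists then for each $i$ the congruence $l\equiv e_{i}/2\pmod{e_{i}}$ forces $l=2^{r_{i}-1}m_{i}(1+2k_{i})$, where I write $e_{i}=2^{r_{i}}m_{i}$ with $m_{i}$ odd and $k_{i}\geq 0$. Hence $V_{2}(l)=r_{i}-1$, which is independent of $i$, so all $r_{i}=V_{2}(e_{i})=V_{2}(d_{i})$ coincide, and equal $V_{2}(l)+1\geq 1$.

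For sufficiency, assume $V_{2}(d_{i})=r\geq 1$ for all $i$, so $e_{i}=2^{r}m_{i}$ with $m_{i}$ odd. Set $M=\mathrm{lcm}(m_{1},\ldots,m_{s})$, which is odd, and take $l=2^{r-1}M$. For each $i$, $M/m_{i}$ is an odd integer (an integer divisor of an odd number), so $l=2^{r-1}m_{i}\cdot(M/m_{i})\equiv 2^{r-1}m_{i}=e_{i}/2\pmod{e_{i}}$, hence $2^{l}\equiv-1\pmod{p_{i}^{a_{i}}}$ for every $i$. By CRT, $2^{l}\equiv-1\pmod m$, completing the proof. The only delicate point is the lifting statement $V_{2}(e_{i})=V_{2}(d_{i})$; everything else is a clean CRT plus cyclic-group bookkeeping.
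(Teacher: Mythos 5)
Your proof is correct and follows essentially the same route as the paper's: reduce via CRT to the prime-power moduli, use the cyclicity of $(\mathbb{Z}/p_{i}^{a_{i}}\mathbb{Z})^{\ast}$ and the standard lifting fact that the order of $2$ modulo $p_{i}^{a_{i}}$ equals $d_{i}p_{i}^{j_{i}}$ (so its $2$-adic valuation equals $V_{2}(d_{i})$), and then do $2$-adic bookkeeping on $l$. The only cosmetic difference is that you pin down $V_{2}(l)=r_{i}-1$ directly from the congruence $l\equiv e_{i}/2\pmod{e_{i}}$, whereas the paper takes $l$ minimal and argues via $l=\mathrm{LCM}\{l_{1},\ldots,l_{s}\}$; both are the same idea.
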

\begin{proof}
Suppose that there exists integer $l\geq 1$ such that $2^{l}\equiv-1\pmod m$. Let $l$ be the least positive integer satisfying this condition. Then the order of 2 in $Z_{m}^{\ast}$ is $2l$. From $2^{l}\equiv-1\pmod {p_{i}^{a_{i}}}$ we know that there exists the least positive integer $l_{i}$ such that $2^{l_{i}}\equiv-1\pmod {p_{i}^{a_{i}}}$, so that the order of 2 in $Z_{p_{i}^{a_{i}}}^{\ast}$ is $2l_{i}$. By Chinese Remainder Theory (CRT) we know that $2l=LCM\{2l_{1},\ldots, 2l_{s}\}$ so that $l=LCM\{l_{1},\ldots, l_{s}\}$ and $V_{2}(l)=max\{V_{2}(l_{1}),\ldots,V_{2}(l_{s})\}$. If $V_{2}(l)>V_{2}(l_{i})$ for some $i$, then $l_{i}\mid \frac{l}{2}$ and $2^{l}\equiv {(2^{\frac{l}{2}})^{2}}\equiv (\pm1)^{2}\equiv 1\pmod {p_{i}^{a_{i}}}$ which is contradicts to $2^{l}\equiv-1\pmod {p_{i}^{a_{i}}}$. Therefore $V_{2}(l_{1})=\cdots=V_{2}(l_{s})\geq 0$. By Euler Theorem, $2l_{i}\mid \varphi(p_{i}^{a_{i}})=p_{i}^{a_{i}-1}(p_{i}-1)$. So that $2l_{i}=b_{i}p_{i}^{c_{i}}$ where $b_{i}\mid p_{i}-1$ and $0\leq c_{i}\leq a_{i}-1$. From elementary number theory we know that the order of $2$ in $Z_{p_{i}}^{\ast}$ is $d_{i}=b_{i}$. Therefore $V_{2}(d_{i})=V_{2}(b_{i})=V_{2}(2l_{i})=V_{2}(l_{i})+1 (1\leq i \leq s)$  are the same positive integer.

On the other hand, suppose that $d_{i}$ be the order of $2$ in $Z_{p_{i}}^{\ast}$ and $V_{2}(d_{1})=\cdots=V_{2}(d_{s})=r\geq1$. Then $d_{i}=2^{r}e_{i}, 2\nmid e_{i}$ and the order of $2$ in $Z_{p_{i}^{a_{i}}}^{\ast}$ is $2l_{i}=d_{i}p_{i}^{c_{i}}$ for some $0\leq c_{i}\leq  a_{i}-1$. Therefore $2^{l_{i}}=2^{2^{r-1}e_{i}p_{i}^{c_{i}}}\equiv-1\pmod {p_{i}^{a_{i}}}$ where $e_{i}p_{i}^{c_{i}}$ is odd $(1\leq i \leq s)$. We take an odd integer $N$ such that $e_{i}p_{i}^{c_{i}}\mid N$ for all $i$, then $h_{i}=\frac{N}{e_{i}p_{i}^{c_{i}}} (1\leq i \leq s)$ are odd integers, so that
$$2^{2^{r-1}N}=(2^{2^{r-1}e_{i}p_{i}^{c_{i}}})^{h_{i}}\equiv(-1)^{h_{i}}\equiv-1\pmod {p_{i}^{a_{i}}} (1\leq i \leq s).$$
By CRT we get $2^{2^{r-1}N}\equiv-1\pmod {m}.$ This completes the proof of Theorem \ref{thm:condition}.
\end{proof}

For an odd prime number $p$, let $d_{p}$ be the order of $2$ in $Z_{p}^{\ast}$. Now we determine $r_{p}=V_{2}(d_{p})$. If $p\equiv7\pmod 8$, then $2^{\frac{p-1}{2}}\equiv \big(\frac{2}{p}\big)=1 \pmod p$ where $ \big(\frac{2}{p}\big)$ is the Legendre symbol. Since $2\nmid \frac{p-1}{2}$ and $d_{p}\mid \frac{p-1}{2}$, we know that $r_{p}=0$. If $p\equiv3\pmod 8$, then $2^{\frac{p-1}{2}}\equiv \big(\frac{2}{p}\big)=-1 \pmod p$. Therefore $d_{p}=2s$ where $s\mid \frac{p-1}{2}$. Since $\frac{p-1}{2}$ is odd, so that $s$ is odd and $r_{p}=V_{2}(d_{p})=V_{2}(2)=1.$ If $p\equiv5\pmod 8$, then
$2^{\frac{p-1}{2}}\equiv \big(\frac{2}{p}\big)=-1 \pmod p$ and $\frac{p-1}{2}=2s, 2\nmid s$. Thus $d_{p}=4s', s'\mid s$, so that $r_{p}=V_{2}(d_{p})=2$. At last, for $p\equiv1\pmod 8$, Table \ref{tab:1} below shows that $r_{p}$ can be different integers.

Let $2\nmid m=p_{1}^{a_{1}}\cdots p_{s}^{a_{s}}$. By Theorem \ref{thm:self} and Theorem \ref{thm:condition} we know that if $r_{p_{i}} (1\leq i \leq s)$ are the same number $r\geq1$, then there is no GBF with type $\{m,n\}$ and $\{2m,n\}$ for any odd integer $n\geq1$. Therefore we get the following result.

\begin{corollary}\label{cor:semi}
Let $m=p_{1}^{a_{1}}\cdots p_{s}^{a_{s}}$ where $p_{i} (1\leq i \leq s)$ are distinct odd prime numbers and $a_{i}\geq1 (1\leq i \leq s)$. Then there is no GBF with type $\{m,n\}$ and $\{2m,n\}$ for all odd integers $n\geq1$ provided one of the following conditions is satisfied.

$(I)(r=1)$: For each $i (1\leq i \leq s), p_{i}\equiv3\pmod 8$, or $p_{i}\equiv1\pmod 8$ with $r_{p_{i}}=1$;

$(II)(r=2)$: For each $i (1\leq i \leq s), p_{i}\equiv5\pmod 8$, or $p_{i}\equiv1\pmod 8$ with  $r_{p_{i}}=2$;

$(III)(r\geq3)$: For each $i (1\leq i \leq s),$ $p_{i}\equiv1\pmod 8$ and $r_{p_{i}}$ are the same number $r\geq3$.
\end{corollary}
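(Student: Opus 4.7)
The plan is to reduce the corollary directly to the two preceding theorems. By Theorem \ref{thm:self} it suffices to verify, under each of the hypotheses (I), (II), (III), that the semiprimitive condition $2^l \equiv -1 \pmod m$ holds for some $l \geq 1$; and by Theorem \ref{thm:condition} this in turn is equivalent to checking that the integers $r_{p_i} = V_2(d_{p_i})$ for $1 \leq i \leq s$ are all equal to a common positive integer $r \geq 1$. So the whole proof is a bookkeeping check that each of the three hypotheses forces the $r_{p_i}$ to agree on some $r \geq 1$.

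The input for the bookkeeping is the per-prime computation carried out in the paragraph immediately preceding the corollary, via the supplementary law of quadratic reciprocity $\bigl(\frac{2}{p}\bigr) = (-1)^{(p^2-1)/8}$: one has $r_p = 0$ when $p \equiv 7 \pmod 8$, $r_p = 1$ when $p \equiv 3 \pmod 8$, and $r_p = 2$ when $p \equiv 5 \pmod 8$, while for $p \equiv 1 \pmod 8$ the value $r_p$ is not determined by $p \bmod 8$ and must be read off case by case (as tabulated in Table \ref{tab:1}). I would then traverse the three cases: under (I) every $p_i$ has $r_{p_i} = 1$ (primes $\equiv 3 \pmod 8$ automatically, primes $\equiv 1 \pmod 8$ by the explicit side-condition $r_{p_i}=1$), so $r=1$; under (II) every $p_i$ has $r_{p_i} = 2$ for the analogous reasons, so $r=2$; under (III) every $p_i$ has $r_{p_i} = r$ for the stipulated common value $r \geq 3$. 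In all three cases the common value is a positive integer, so Theorem \ref{thm:condition} supplies an $l$ with $2^l \equiv -1 \pmod m$, and Theorem \ref{thm:self} then excludes GBFs of type $\{m,n\}$ and $\{2m,n\}$ for every odd $n \geq 1$.

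There is no real obstacle here — the corollary is essentially a packaging statement, with the substantive work already done in Theorems \ref{thm:self} and \ref{thm:condition} and in the Legendre-symbol computation of $r_p$. The only point to keep an eye on is that primes with $r_p = 0$, namely those $p \equiv 7 \pmod 8$, must be correctly ruled out: they cannot appear among the $p_i$ in any of (I), (II), (III), because each clause of each hypothesis either fixes $p_i \bmod 8$ to a class other than $7$ or fixes $r_{p_i}$ to a strictly positive value. So the three hypothesis-sets are internally consistent and each yields a legitimate common $r \geq 1$, completing the proof.
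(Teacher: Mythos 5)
Your proposal is correct and follows exactly the paper's own route: the paper likewise combines Theorem \ref{thm:self} with the criterion of Theorem \ref{thm:condition} and the preceding Legendre-symbol computation of $r_{p}$ ($r_p=1$ for $p\equiv 3$, $r_p=2$ for $p\equiv 5$, $r_p=0$ for $p\equiv 7 \pmod 8$, and case-by-case for $p\equiv 1$), reducing the corollary to the observation that each hypothesis forces all $r_{p_i}$ to coincide at a common value $r\geq 1$. Nothing is missing.
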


\begin{remark}
At the end of this section we explain the meaning of the semiprimitive condition (\ref{condition}) in algebraic number theory. Let $K=\mathbb{Q}(\zeta_{m}), 2\nmid m$, and $M$ be the decomposition field of $2$ in $K$. The decomposition group of $2$ is the subgroup $\langle\sigma_{2} \rangle$ of $G=Gal(K/\mathbb{Q})=\{\sigma_{a}:a\in Z_{m}^{\ast}\}$. Then

there exists $l\geq1$ such that $2^{l}\equiv-1\pmod m$ $ \Leftrightarrow$ $\sigma_{-1}(=\sigma_{2}^{l})\in \langle\sigma_{2} \rangle \Leftrightarrow$ for each $\alpha \in M, \bar{\alpha}(=\sigma_{-1}(\alpha))=\alpha \Leftrightarrow M\subseteq \mathbb{R}$ (the field of real numbers).

Namely, the semiprimitive condition (\ref{condition}) equivalents to that $M$ is a real number field. In next section we will show several new nonexistence results on GBF where the field $M$ is not real.
\end{remark}

\begin{table}\caption{The values of $r_{p}$ for prime numbers $p\equiv1\pmod 8$}
\label{tab:1}
\begin{tabular}[t]{l|cccccccccccc}
\hline
$p\equiv1\pmod 8$ & 17 & 41&73&89&97&113&137&193&257&1553&1777&65537\\
\hline
 $d_{p}$ &8&20&9&11&48&28&68&96&16&194&74&32 \\
\hline
 $r_{p}=V_{2}(d_{p})$ & 3 &2&0&0&3&2&2&5&4&1&1&5\\
 \hline
\end{tabular}
\end{table}

\section{Nonexistence Results (III): Field-Descent}\label{sec:5}
In this section we use the field-descent method given in \cite{F} to show some nonexistence results on GBFs with type $\{m,n\}$ for odd integer $n\geq1$. The method is based on the following result.

\begin{lemma}\cite{F}\label{KD}
Let $K=\mathbb{Q}(\zeta_{m})$ and $M$ be the decomposition field of $2$ in $K$.   If there exists $\alpha \in \mathcal{O}_{K}$ such that $\alpha\bar{\alpha}=2^{n}$, then there is $\beta\in{\mathcal{O}}_{K}$ such that $\beta^{2}\in {\mathcal{O}}_{M}$ and $\beta\bar{\beta}=2^{n}$. Moreover, if $[K:M]$ is odd, then $\beta\in{\mathcal{O}}_{M}$.
\end{lemma}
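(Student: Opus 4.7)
The plan is to work with the ideal factorization of $\alpha$ and show that $\sigma_2(\alpha)/\alpha$ is forced to be a root of unity, then kill this root of unity by rescaling $\alpha$ with an appropriate power of $\zeta_m$ so that the adjusted element is (almost) fixed by $\sigma_2$.

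First I would analyze the factorization. Since $2\mathcal{O}_K = P_1\cdots P_g$ with the $P_i$ distinct (FACT(5)) and $\alpha\bar\alpha=2^n$, the principal ideal $\alpha\mathcal{O}_K$ divides $2^n\mathcal{O}_K = P_1^n\cdots P_g^n$, so $\alpha\mathcal{O}_K=P_1^{a_1}\cdots P_g^{a_g}$ with $0\le a_i\le n$. Crucially, $\sigma_2$ fixes every $P_i$, hence applying $\sigma_2$ gives $\sigma_2(\alpha)\mathcal{O}_K=\alpha\mathcal{O}_K$, i.e.\ $u:=\sigma_2(\alpha)/\alpha\in U_K$.

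Next I would upgrade this unit to a root of unity. Because $\mathrm{Gal}(K/\mathbb{Q})$ is abelian, complex conjugation commutes with every $\sigma\in\mathrm{Gal}(K/\mathbb{Q})$, so $|\sigma(\alpha)|^2=\sigma(\alpha\bar\alpha)=2^n$ for every such $\sigma$. Applying this to both $\alpha$ and $\sigma_2(\alpha)$ yields $|\sigma(u)|=1$ for every $\sigma$, and FACT(2) then forces $u\in W_K=\langle\zeta_{2m}\rangle$. Write $u=\epsilon_1\zeta_m^{j}$ with $\epsilon_1\in\{\pm1\}$ and $0\le j<m$, and set
\[
\beta:=\zeta_m^{-j}\alpha.
\]
A direct computation gives $\sigma_2(\beta)=\zeta_m^{-2j}\sigma_2(\alpha)=\zeta_m^{-2j}\cdot\epsilon_1\zeta_m^{j}\alpha=\epsilon_1\beta$. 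Since $\zeta_m^{-j}$ is a root of unity of modulus $1$, we still have $\beta\bar\beta=\alpha\bar\alpha=2^n$, and $\sigma_2(\beta^2)=\epsilon_1^2\beta^2=\beta^2$, so $\beta^2\in\mathcal{O}_M$ by Galois theory.

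For the ``moreover'' clause, assume $[K:M]=f$ is odd, where $f$ is the order of $\sigma_2$. Applying $\sigma_2$ iteratively to $\sigma_2(\beta)=\epsilon_1\beta$ gives $\beta=\sigma_2^{f}(\beta)=\epsilon_1^{f}\beta$, so $\epsilon_1^{f}=1$; with $f$ odd this forces $\epsilon_1=1$, hence $\sigma_2(\beta)=\beta$ and $\beta\in\mathcal{O}_M$. The only subtle point I expect is verifying that the \emph{only} units of $\mathcal{O}_K$ with all Galois conjugates of absolute value $1$ are roots of unity (a form of Kronecker's theorem, essentially FACT(2)); once this is in hand, the rescaling trick and the parity argument for odd $f$ close the proof cleanly.
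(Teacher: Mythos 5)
Your proposal is correct and follows essentially the same route as the paper: factor $\alpha\mathcal{O}_K$ into primes fixed by $\sigma_2$, deduce that $u=\sigma_2(\alpha)/\alpha$ is a unit with all archimedean absolute values $1$, hence a root of unity $\pm\zeta_m^j$, and rescale by $\zeta_m^{-j}$ to get $\sigma_2(\beta)=\pm\beta$. The only (harmless) divergence is in the final clause, where you derive $\epsilon_1=1$ by iterating $\sigma_2$ an odd number $f=[K:M]$ of times, while the paper instead observes that $\beta^2\in\mathcal{O}_M$ and $M$ has no quadratic subextension inside $K$ when $[K:M]$ is odd; both arguments are valid.
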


\begin{proof} Lemma \ref{KD} has been proved as Lemma 2.2 in \cite{F}. Here we copy this proof for convenience of readers. Suppose that $\alpha{\mathcal{O}}_{K}=P_{1}\cdots P_{s}$ where $P_{i} (1\leq i\leq s)$ are prime ideals of ${\mathcal{O}}_{K}$. From $\alpha\bar{\alpha}=2^{n}$ we know that $P_{i}$ is a prime factor of $2{\mathcal{O}}_{K}$, so that $\sigma_{2}(P_{i})=P_{i} (1\leq i\leq s)$ by FACT (5) in Sect. \ref{sec:4}. Therefore
$$\sigma_{2}(\alpha){\mathcal{O}}_{K}=\sigma_{2}(P_{1})\cdots \sigma_{2}(P_{s})=P_{1}\cdots P_{s}=\alpha{\mathcal{O}}_{K}$$ which implies that $\sigma_{2}(\alpha)=\alpha \epsilon$ where $\epsilon \in U_{K}$ (the unit group of ${\mathcal{O}}_{K}$). For each $\sigma\in G=Gal(K/\mathbb{Q})$ we have
$$\sigma(\alpha)\overline{\sigma(\alpha)}=\sigma(\alpha)\sigma_{-1}(\sigma(\alpha))=\sigma(\alpha)\sigma(\sigma_{-1}(\alpha))=
\sigma(\alpha)\sigma(\bar{\alpha})=\sigma(\alpha\bar{\alpha})=\sigma(2^{n})=2^{n},$$
$$\sigma\sigma_{2}(\alpha)=\sigma(\alpha\epsilon)=\sigma(\alpha)\sigma(\epsilon).$$
Thus $$2^{n}=\sigma\sigma_{2}(\alpha)\cdot\sigma\sigma_{2}(\bar{\alpha})=
\sigma(\alpha)\sigma(\epsilon)\overline{\sigma(\alpha)} \overline{\sigma(\epsilon)}=2^{n}|\sigma(\epsilon)|^{2}.$$
Therefore $|\sigma(\epsilon)|=1$ for any $\sigma\in G$. By FACT (2), $\epsilon\in W_{K}.$ Namely, $\epsilon=\pm\delta$ and $\delta=\zeta_{m}^{i}$ for some $i$. Let $\beta=\alpha\delta^{-1},$ then $\beta\bar{\beta}=\alpha\bar{\alpha}=2^{n}$ and $$\sigma_{2}(\beta)=\sigma_{2}(\alpha)\sigma_{2}(\delta)^{-1}=\alpha\epsilon\delta^{-2}=\pm\alpha\delta^{-1}=\pm\beta.$$
Therefore $\sigma_{2}(\beta^{2})=\beta^{2}$ which means that $\beta^{2}\in{\mathcal{O}}_{M}.$
Moreover, if $[K:M]$ is odd, then $M$ has no quadratic extension in $K$ so that $\beta\in {\mathcal{O}}_{M}$.
\end{proof}

From Corollary \ref{cor:semi} we know that for an odd prime number $p$, there is no GBF with type $\{2p^{l},n\}$ for any $l\geq1$ and odd $n\geq1$ if $p\equiv 3,5\pmod 8$, or $p\equiv1\pmod 8$ and $2|f$ where $f$ is the odd of 2 in $Z_{p}^{^{*}}.$ For the remain (non-semiprimitive) case $p\equiv 7\pmod 8$, we have the following result.

\begin{theorem}\label{thm:p=7}
Let $m=p^{l}$ where $l\geq1$ and $p\equiv7\pmod8$ is a prime number. Let $f$ be the order of $2$ in $Z_{p}^{^{*}}$ (from $\big(\frac{2}{p}\big)=1$ we know that $2\nmid f$),
$g=\frac{\varphi(p^{l})}{f}$ (is even) and $s=\frac{g}{2}$ (is odd since $\varphi(p^{l})=(p-1)p^{l-1}\equiv2\pmod4$). Let $r$ be the least positive odd integer such that $x^{2}+py^{2}=2^{r+2}$ has a solution $(x,y), x,y\in \mathbb{Z}$. Then
there is no GBF with type $\{2p^{l},n\}$ for any odd integer $1\leq n<\frac{r}{s}$.
\end{theorem}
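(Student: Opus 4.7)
The plan is to apply the field-descent Lemma \ref{KD} to pass from $K=\mathbb{Q}(\zeta_{p^l})$ to its decomposition subfield $M$, then take the relative norm down to the imaginary quadratic subfield $F=\mathbb{Q}(\sqrt{-p})$, and finally translate the resulting norm identity in $\mathcal{O}_F$ into the diophantine equation $x^2+py^2=2^{e+2}$ so as to contradict the minimality of $r$.

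Suppose for contradiction that $f:Z_2^n\to Z_{2p^l}$ is a GBF with $n$ odd and $1\le n<r/s$. Then $\alpha=W_f(0)\in\mathcal{O}_K$ satisfies $\alpha\bar\alpha=2^n$. Since $p\equiv 7\pmod 8$ makes $2$ a quadratic residue mod $p$ (hence mod $p^l$), the order $f'$ of $2$ in $Z_{p^l}^{*}$ divides $\varphi(p^l)/2=(p-1)p^{l-1}/2$, which is odd because $(p-1)/2$ is odd. Hence $[K:M]=f'$ is odd, and Lemma \ref{KD} supplies $\beta\in\mathcal{O}_M$ with $\beta\bar\beta=2^n$.

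Next I verify $F\subseteq M$. The unique quadratic subfield of $K$ is $F=\mathbb{Q}(\sqrt{-p})$ (using $p\equiv 3\pmod 4$), and it is the fixed field of the unique index-$2$ subgroup of the cyclic group $G=\mathrm{Gal}(K/\mathbb{Q})$, namely the squares in $Z_{p^l}^{*}$. Since $2$ is a square mod $p^l$, the generator $\sigma_2$ lies in this subgroup, so $H=\langle\sigma_2\rangle$ fixes $F$ and $F\subseteq M$. Setting $\gamma=N_{M/F}(\beta)\in\mathcal{O}_F$ and using that $G$ is abelian (so $\sigma_{-1}$ commutes with every element of $\mathrm{Gal}(M/F)$, giving $\bar\gamma=N_{M/F}(\bar\beta)$), I obtain $\gamma\bar\gamma=N_{M/F}(2^n)=2^{n[M:F]}$. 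Because $\mathcal{O}_F=\mathbb{Z}[(1+\sqrt{-p})/2]$ (as $p\equiv 3\pmod 4$), writing $\gamma=(x+y\sqrt{-p})/2$ with integers $x\equiv y\pmod 2$ converts this identity into $x^2+py^2=2^{n[M:F]+2}$.

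The exponent $n[M:F]$ is positive and odd ($[M:F]=\varphi(p^l)/(2f')$ is odd since $\varphi(p^l)/2$ is odd and $f'$ is odd), so by the minimality of $r$ we must have $n[M:F]\ge r$. Since $f\mid f'$, we have $s=\varphi(p^l)/(2f)\ge\varphi(p^l)/(2f')=[M:F]$, hence $ns\ge n[M:F]\ge r$, contradicting the assumption $n<r/s$. The main delicate point I anticipate is this Galois-theoretic bookkeeping reconciling the theorem's $s$ (defined via the order of $2$ in $Z_p^{*}$) with the actual index $[M:F]$ (governed by the order of $2$ in $Z_{p^l}^{*}$); the remaining ingredients—the inclusion $F\subseteq M$, commutation of $\sigma_{-1}$ with the norm, and the parity accounting—are routine consequences of the facts collected in Section \ref{sec:4}.
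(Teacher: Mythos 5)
Your proof is correct and follows essentially the same route as the paper's: Lemma \ref{KD} descends to $\mathcal{O}_M$, the relative norm down to $\mathbb{Q}(\sqrt{-p})$ yields $x^{2}+py^{2}=2^{\,n[M:F]+2}$ with odd exponent, and minimality of $r$ gives the contradiction. Your extra bookkeeping separating the order of $2$ modulo $p$ from its order modulo $p^{l}$ (and the inequality $s\geq[M:F]$) addresses a point the paper's proof silently conflates, but it does not change the argument's substance.
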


\begin{proof}

Suppose there exists a GBF $f$ with type $\{2m,n\}$ , then $\alpha=W_{f}(y)\in\mathcal{O}_K (K=\mathbb{Q}(\zeta_{m}))$ for any $y\in Z_{2}^{n}$ and $\alpha\bar{\alpha}=2^n$. By Lemma \ref{KD}, there exists $\beta\in{\mathcal{O}}_{K}$ such that $\beta^{2}\in{\mathcal{O}}_{M}$ and $\beta\bar{\beta}=2^n$ where $M$ is the decomposition field of 2 in $K$. Since $[K:M]=f$ is odd, we know that $\beta\in{\mathcal{O}}_{M}$ and $[M:\mathbb{Q}]=g=2s$. The Galois group $G=Gal(K/\mathbb{Q})$  is isomorphic to the  cyclic group $\mathbb{Z}_{p^{l}}^{*}$. By Galois theory, there exists an unique quadratic field $T$, $\mathbb{Q}\subset T\subseteq M, [T:\mathbb{Q}]=2$ and $[M:T]=s$. It is well known that $T$ is the imaginary quadratic field $\mathbb{Q}(\sqrt{-p})$ $\big($one of ways for proving this fact is by the quadratic Gauss sum $\sum_{i=1}^{p-1}\big(\frac{i}{p}\big)\zeta_{p}^{i}=\sqrt{-p}$, so that $\sqrt{-p}\in \mathbb{Q}(\zeta_{p})\subseteq \mathbb{Q}(\zeta_{m})=K$ and $T=\mathbb{Q}(\sqrt{-p})\big)$.

\begin{tikzpicture}
\draw[-](0,0)--(0,.5);
\node [above] at (0,.5){$K=\mathbb{Q}(\zeta_{m})$};
\node[below] at (0,0){$M$};
\node[left] at (0,.25){$f$};
\draw[-](0,-.4)--(0,-1.1);
\node[below] at (0,-1.1){$T=\mathbb{Q}(\sqrt{-p})$};
\node[left] at (0,-.8){$s$};
\draw[-](0,-1.6)--(0,-2.1);
\node[below] at (0,-2.1){$\mathbb{Q}$};
\node[left] at (0,-1.9){$2$};
\draw[-](2,0)--(2,.5);
\node [above] at (2,.5){$(1)$};
\node[below] at (2,0){$\langle \sigma_{2} \rangle$};
\draw[-](2,-.4)--(2,-2.1);
\node[below] at (2,-2.1){$G$};
\end{tikzpicture}

Let $N_{M/T}:M\longrightarrow T$ be the norm mapping and $\gamma=N_{M/T}(\beta)\in\mathcal{O}_{T}$, then
$\gamma\bar{\gamma}=N_{M/T}(\beta)N_{M/T}(\bar{\beta})=N_{M/T}(\beta\bar{\beta})=N_{M/T}(2^{n})=2^{ns}$ where $ns$ is odd. It is known that $\gamma \in {\mathcal{O}}_{T}$ can be expressed as $\gamma=\frac{1}{2}(A+B\sqrt{-p})$ where $A,B\in \mathbb{Z}, A \equiv B\pmod 2$ (FACT (6) in Sect. \ref{sec:4}).
Therefore  $2^{ns}=\gamma\bar{\gamma}=\frac{1}{4}(A^{2}+pB^{2})$. By the definition of $r$ and $2\nmid ns$ we get $ns\geq r$. Therefore there is no GBF with type $\{2p^{l},n\}$ if $2\nmid n,n<\frac{r}{s}$. This completes the proof of Theorem \ref{thm:p=7}.
\end{proof}

\begin{remark}\label{rem:p7}
(A). Let $p$ be an odd prime number. For any $l\geq1$, we denote $f_{l}$ be the order of 2 modulo $p^{l}$, $g_{l}=\frac{\varphi(p^{l})}{f_{l}}$. It is easy to see that if $2^{p-1}\not\equiv1\pmod {p^{2}}$ then
$f_{l}=p^{l-1}f_{1}$, so that $g_{l}=\frac{\varphi(p^{l})}{f_{l}}=\frac{\varphi(p)p^{l-1}}{f_{1}p^{l-1}}=g_{1}$ for all  $l\geq 2$. It is known that $2^{p-1}\not\equiv1\pmod {p^{2}}$ for all odd prime numbers $p<6\cdot10^{9}$ except $p=1093$ and $3511$. Thus for such $p$ it is enough to compute $g=g_{1}$.

(B). The definition of $r$ is elementary. Now we prove existence of $r$ and explain its meaning in algebraic number theory. Since $r$ is the least odd integer such that $x^{2}+py^{2}=2^{r+2}$ has a solution  $(x,y)=(A,B)\in \mathbb{Z}^{2}$, we know that both $A$ and $B$ are odd, so that $\delta=\frac{1}{2}(A+B\sqrt{-p})\in \mathcal{O}_{T} (T=\mathbb{Q}(\sqrt{-p}))$ and $\delta\bar{\delta}=2^{r}$. From $p\equiv7\pmod8$ we know that $2\mathcal{O}_{T}=P\bar{P}$ where $P$ and $\bar{P}=\sigma_{-1}(P)$ are distinct prime ideals of $\mathcal{O}_{T}$ (FACT (6) of Sect. \ref{sec:4}). Let $[P]$ be the ideal class of $P$ in the class group $C(T)$. We have $1=[2 \mathcal{O}_{T}]=[P][\bar{P}]$ so that $[\bar{P}]=[P]^{-1}$. From $\delta\bar{\delta}\mathcal{O}_{T}=2^{r}\mathcal{O}_{T}=P^{r}\bar{P}^{r}$ and the minimum property of $r$ we get $\delta\mathcal{O}_{T}=P^{r}$ or $\bar{P}^{r}$ and $r$ is the order of $[P]$ in $C(T)$. Therefore $r|h(T)$ (the class number of $T$). By the Gauss genus theory, $h(T)$ is odd for $T=\mathbb{Q}(\sqrt{-p})$ and $p\equiv7\pmod8$. On the other hand, we have $2^{r+2}=A^{2}+pB^{2}>p$ which implies that $r>\frac{\log p}{\log 2}-2$. Particular we have $r\geq3$ if $p\equiv7\pmod8$ and $p\geq23$. Thus $r=h(T)$ if $h(T)$ is a prime number.
\end{remark}

\begin{example}
There are $11$ primes satisfying $p\equiv7\pmod8$ within $200$. The Table \ref{tab:2} presents the values of $s, h(L)$ and $r$ for $L=\mathbb{Q}(\sqrt{-p})$.
\begin{table}\caption{The values of $s$ and $r$ for prime numbers $p\equiv7\pmod 8$}
\label{tab:2}
\begin{tabular}[t]{l|ccccccccccc}
\hline
   $p$ & 7 & 23 & 31 & 47 & 71 & 79 & 103 & 127 & 151 & 191 & 199 \\
\hline
   $s$ & 1 & 1 & 3 & 1 & 1 & 1 & 1 & 9 & 5 & 1 & 1   \\
\hline
  $h(L)$ & 1 & 3 & 3 & 5 & 7 & 5 & 5 & 5 & 7 & 13 & 9   \\
\hline
 $r$ & 1 & 3 & 3 & 5 & 7 & 5 & 5 & 5 & 7 & 13 & 9   \\
 \hline
\end{tabular}
\end{table}

For $p=47, 79$ and $103$, we have $s=1$ and $r=5$. By Theorem \ref{thm:p=7}, there is no GBF with type $\{2\cdot p^{l},1\}$ and $\{2\cdot p^{l},3\}$ for all $l\geq1$. For $p=199$, We have $s=1$ and $r=9$. There is no GBF with type $\{2\cdot 199^{l},n\}$ for $n=1,3,5,7$ and all $l\geq1$. Similarly, there is no  GBF with type $\{2\cdot 191^{l},n\}$ for $n=1,3,5,7,9,11$ and all $l\geq1$.
\end{example}

The following two corollaries are direct consequences of Theorem \ref{thm:p=7}.

\begin{corollary}
Suppose that $p\equiv7\pmod8$ is a prime number, $2^{p-1}\not\equiv1\pmod{p^{2}}$, and
 the order of $2$ modulo $p$ is $\frac{p-1}{2}$. Then there is no GBF with type $\{2p^{l},n\}$ for all $l\geq1$ and $2\nmid n<r$ where $r$ is defined in Theorem \ref{thm:p=7}.
\end{corollary}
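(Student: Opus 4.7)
The plan is to reduce this corollary directly to Theorem \ref{thm:p=7} by showing that the three hypotheses together force $s=1$, so that the bound $n<r/s$ in the theorem specializes to $n<r$.

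First, I would invoke Remark \ref{rem:p7}(A). The hypothesis $2^{p-1}\not\equiv 1\pmod{p^2}$ is exactly what is needed there to conclude that for every $l\geq 1$ the order $f_l$ of $2$ modulo $p^l$ satisfies $f_l=p^{l-1}f_1$, and consequently
\[
g_l=\frac{\varphi(p^l)}{f_l}=\frac{(p-1)p^{l-1}}{p^{l-1}f_1}=\frac{p-1}{f_1}=g_1.
\]
Thus the relevant invariant $g$ in Theorem \ref{thm:p=7} is independent of $l$ and equals $g_1$.

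Second, I would use the assumption that $f_1$, the order of $2$ in $Z_p^{*}$, equals $(p-1)/2$. This immediately gives $g=g_1=(p-1)/f_1=2$, hence $s=g/2=1$. (One should also note consistency with Theorem \ref{thm:p=7}: since $p\equiv 7\pmod 8$, the number $(p-1)/2$ is odd, matching the requirement $2\nmid f$, and $g=2$ is even, matching that $g$ is even and $s$ is odd.)

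Finally, with $s=1$ in hand, Theorem \ref{thm:p=7} applied to the prime $p$ and any exponent $l\geq 1$ yields no GBF of type $\{2p^l,n\}$ for every odd $n$ with $1\leq n<r/s=r$, which is exactly the statement of the corollary. There is essentially no obstacle: the corollary is a direct specialization, and the only point to verify carefully is that each hypothesis of Theorem \ref{thm:p=7} and of Remark \ref{rem:p7}(A) is met, which the three stated conditions ensure.
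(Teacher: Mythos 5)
Your proposal is correct and follows exactly the route the paper intends: the paper states this corollary as a direct consequence of Theorem \ref{thm:p=7}, and your argument supplies the (omitted) verification that the hypotheses $2^{p-1}\not\equiv 1\pmod{p^2}$ and $f_1=(p-1)/2$ force $g_l=g_1=2$ and hence $s=1$, so the bound $n<r/s$ becomes $n<r$. No gaps.
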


\begin{corollary}
Suppose that $p\equiv7\pmod8$ is a prime number, $p\geq23$, $2^{p-1}\not\equiv1\pmod {p^2}$
and the ideal class number $h(T)$ of $T=\mathbb{Q}(\sqrt{-p})$ be a prime number. Then there is no GBF with type $\{2p^{l},n\}$ for all $l\geq1$ and $2\nmid n<\frac{h(T)}{s}$ where $s=\frac{p-1}{2f}$ and $f$ is the order of $2$ modulo $p$.
\end{corollary}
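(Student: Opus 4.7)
The plan is to deduce this corollary directly from Theorem~\ref{thm:p=7} by evaluating the two quantities $r$ and $s$ appearing there under the stronger hypotheses. Theorem~\ref{thm:p=7} rules out $\{2p^{l},n\}$-type GBFs for all odd $n<r/s$, so it suffices to show that, under the present assumptions, $s=(p-1)/(2f)$ and $r=h(T)$.

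First I would invoke Remark~\ref{rem:p7}(A): the hypothesis $2^{p-1}\not\equiv 1\pmod{p^{2}}$ forces $f_{l}=p^{l-1}f_{1}$, and hence $g_{l}=\varphi(p^{l})/f_{l}=\varphi(p)/f_{1}=g_{1}$ for every $l\geq 1$. Therefore $s=g_{l}/2=g_{1}/2=\varphi(p)/(2f)=(p-1)/(2f)$, independently of $l$, which is exactly the $s$ in the statement.

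Next I would pin down $r$ using Remark~\ref{rem:p7}(B). That remark shows $r$ equals the order of the ideal class $[P]$ in $C(T)$, where $2\mathcal{O}_{T}=P\bar P$, so in particular $r\mid h(T)$. The same remark (from $2^{r+2}=A^{2}+pB^{2}>p$) gives the bound $r>\log_{2}p-2$, and records that $r\geq 3$ already for $p\equiv 7\pmod 8$ with $p\geq 23$. Since here $h(T)$ is assumed prime and $r$ is a positive divisor of $h(T)$ with $r\geq 3>1$, the only possibility is $r=h(T)$.

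Combining the two identifications, Theorem~\ref{thm:p=7} yields the nonexistence of $\{2p^{l},n\}$-type GBFs for every odd $n<r/s=h(T)/s$ and every $l\geq 1$, which is the desired conclusion. There is no real obstacle in the argument: the only mild point of care is ensuring $r>1$ so that primality of $h(T)$ actually forces $r=h(T)$, and this is handled by the explicit lower bound $r\geq 3$ guaranteed by the hypothesis $p\geq 23$.
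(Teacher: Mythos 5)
Your argument is correct and is exactly the route the paper intends: it declares the corollary a direct consequence of Theorem~\ref{thm:p=7}, with the two identifications you make ($s=(p-1)/(2f)$ via Remark~\ref{rem:p7}(A), and $r=h(T)$ via $r\mid h(T)$, $r\geq 3$ for $p\geq 23$ in Remark~\ref{rem:p7}(B)) already recorded there. Your only added care --- checking $r>1$ so that primality of $h(T)$ forces $r=h(T)$ --- is precisely the point the paper's remark also handles.
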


From now on we assume that $m=p_{1}^{a_{1}}p_{2}^{a_{2}}$ where $p_{1}, p_{2}$ are distinct odd prime numbers, $a_{1}, a_{2}\geq1$. Let $f_{1}, f_{2}$ and $f$ be the order of $2$ in $Z_{p_{1}^{a_{1}}}^{\ast}, Z_{p_{2}^{a_{2}}}^{*}$ and $Z_{m}^{*}$ respectively. Then $f=LCM\{f_{1},f_{2}\}$. Let $M_{1}, M_{2}$ and $M$ be the decomposition fields of 2 in $K_{1}=\mathbb{Q}(\zeta_{p_{1}^{a_{1}}}), K_{2}=\mathbb{Q}(\zeta_{p_{2}^{a_{2}}})$ and $K=\mathbb{Q}(\zeta_{m})=K_{1}K_{2}$ respectively. Then $g_{i}=[M_{i}:\mathbb{Q}]=\frac{\varphi(p_{i}^{a_{i}})}{f_{i}} (i=1,2)$ and
$$g=[M:\mathbb{Q}]=\frac{\varphi(m)}{f}=\frac{\varphi(p_{1}^{a_{1}})\varphi(p_{2}^{a_{2}})}{LCM\{f_{1},f_{2}\}}
=\frac{\varphi(p_{1}^{a_{1}})}{f_{1}}\frac{\varphi(p_{2}^{a_{2}})}{f_{2}}\cdot GCD\{f_{1},f_{2}\}=g_{1}g_{2}\cdot GCD\{f_{1},f_{2}\}.$$

\begin{theorem}\label{thm:735}
Suppose that $p_{1}\equiv7\pmod8$, $p_{2}\equiv 3,5\pmod8$ and $m=p_{1}^{a_{1}}p_{2}^{a_{2}}$ $(a_{1}, a_{2}\geq1)$. Then $g=2s$ and $s$ is odd. Let $r_{1}\geq1$ be the least odd integer such that $x^{2}+p_{1}y^{2}=2^{r_{1}+2}$ has a solution $(x,y), x,y \in \mathbb{Z}$. Let $r_{2}\geq1$ be the least odd integer such that $x^{2}+p_{1}y^{2}=2^{r_{2}+2}p_{2}$ has a solution $(x,y), x,y \in \mathbb{Z}$. (If there is no solution $(x,y)\in \mathbb{Z}^{2}$ of $x^{2}+p_{1}y^{2}=2^{r_{2}+2}p_{2}$ for any odd $r_{2}\geq1$, we assume that $r_{2}=+\infty$). Let $r=min\{r_{1},r_{2}\}$. Then there is no GBF with type $\{2m,n\} (\mbox{and }\{m,n\})$ provided one of the following conditions is satisfied.

(I). $2\nmid n<\frac{r_{1}}{s}$ and $\big(\frac{-p_{1}}{p_{2}}\big)=-1$;

(II). $2\nmid n<\frac{r}{s}$ and $\big(\frac{-p_{1}}{p_{2}}\big)=1$.
\end{theorem}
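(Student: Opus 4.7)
The plan is to adapt the descent strategy of Theorem \ref{thm:p=7}, with the caveat that $[K:M] = f = \mathrm{lcm}(f_1, f_2)$ is now even: $f_1$ is odd (because $p_1 \equiv 7 \pmod 8$ makes $\big(\frac{2}{p_1}\big) = 1$) while $f_2$ is even (because $p_2 \equiv 3, 5 \pmod 8$ makes $\big(\frac{2}{p_2}\big) = -1$). Consequently, Lemma \ref{KD} delivers $\beta \in \mathcal{O}_K$ with $\beta^2 \in \mathcal{O}_M$ and $\beta\bar\beta = 2^n$, but we can no longer conclude $\beta \in \mathcal{O}_M$; an additional ideal-theoretic ``square-root'' step will be needed.

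First I would dispatch the structural claims. Plugging into $g = g_1 g_2 \gcd(f_1, f_2)$ with $g_1 = 2\cdot(\text{odd})$ and $g_2,\gcd(f_1, f_2)$ both odd gives $g = 2s$ with $s$ odd. The unique imaginary quadratic subfield of $M$ is $T = \mathbb{Q}(\sqrt{-p_1})$: the three quadratic subfields of $K$ are $\mathbb{Q}(\sqrt{-p_1})$, $\mathbb{Q}(\sqrt{\pm p_2})$ and their composite, and the Gauss-sum identity $\sigma_2(\sqrt{\pm p_i}) = \big(\frac{2}{p_i}\big)\sqrt{\pm p_i}$ shows that $\big(\frac{2}{p_i}\big) = +1$ only for $i = 1$. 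By FACT (6) we have $2\mathcal{O}_T = P_T \bar P_T$; by the definition of $M$, the prime $2$ splits completely from $T$ to $M$, yielding $s$ primes over each of $P_T$ and $\bar P_T$ in $\mathcal{O}_M$.

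Now suppose a GBF of type $\{2m, n\}$ exists and pick $\beta$ as above. Factor $\beta\mathcal{O}_K = \prod_i P_i^{b_i}$ over the primes of $\mathcal{O}_K$ above $2$; since $P_i = \mathfrak{p}_i \mathcal{O}_K$ by FACT (5), we get $\beta^2 \mathcal{O}_M = \prod_i \mathfrak{p}_i^{2b_i} = \mathfrak{b}^2$ with $\mathfrak{b} := \prod_i \mathfrak{p}_i^{b_i}$ an integral ideal of $\mathcal{O}_M$ (not necessarily principal). Set $\mathfrak{B} := N_{M/T}(\mathfrak{b}) = P_T^B \bar P_T^{B'}$, where $B, B'$ collect the $b_i$ over $\mathfrak{p}_i \mid P_T$ and $\mathfrak{p}_i \mid \bar P_T$ respectively; the relation $b_i + b_{\pi(i)} = n$ (from $\beta\bar\beta = 2^n$, with $\pi$ the involution of the $\mathfrak p_i$'s induced by $\sigma_{-1}$) gives $B + B' = ns$. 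Since $\mathfrak{B}^2 = N_{M/T}(\beta^2)\mathcal{O}_T$ is principal, $[\mathfrak{B}]^2 = 1$ in $C(T)$.

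The pivotal observation is that $h(T)$ is odd for $p_1 \equiv 7 \pmod 8$ by Gauss genus theory (as already used in Remark (B) of Theorem \ref{thm:p=7}), so $C(T)$ has no $2$-torsion, forcing $[\mathfrak{B}] = 1$. Hence $\mathfrak{B} = \delta \mathcal{O}_T$ for some $\delta \in \mathcal{O}_T$; a direct norm computation gives $\delta\bar\delta\mathcal{O}_T = (P_T\bar P_T)^{ns} = 2^{ns}\mathcal{O}_T$, and since $U_T = \{\pm 1\}$ we get $\delta\bar\delta = 2^{ns}$, equivalently $A^2 + p_1 B''^{\,2} = 2^{ns+2}$ for some $A, B'' \in \mathbb{Z}$ (where $\delta = (A + B''\sqrt{-p_1})/2$). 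Since $ns$ is odd, the minimality of $r_1$ forces $ns \geq r_1$, so $n \geq r_1/s \geq r/s$, contradicting the hypothesis in either case (I) or (II). The main obstacle will be the square-root step, i.e., passing from $[\mathfrak{B}]^2 = 1$ to $[\mathfrak{B}] = 1$, which depends essentially on $h(T)$ being odd; notably, this also makes any case-split on $\sigma_2(\beta) = \pm\beta$ (and any appeal to $r_2$) unnecessary for the proof.
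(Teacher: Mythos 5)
Your argument is correct, but it is genuinely different from the paper's proof, and in fact it proves something sharper. The paper descends from $K$ to the unique quadratic extension $L=M(\sqrt{\pm p_{2}})$ of $M$, writes $\beta=A+B\sqrt{\pm p_{2}}$ with $A,B\in M$, deduces $AB=0$ from $\beta^{2}\in\mathcal{O}_{M}$, and then splits into two cases: $\beta=A\in\mathcal{O}_{M}$ gives $x^{2}+p_{1}y^{2}=2^{ns+2}$ and $ns\geq r_{1}$, while $\beta=B\sqrt{-p_{2}}$ gives $x^{2}+p_{1}y^{2}=2^{ns+2}p_{2}$, which forces $\bigl(\frac{-p_{1}}{p_{2}}\bigr)=1$ and $ns\geq r_{2}$ --- this second branch is the entire reason $r_{2}$, $r=\min\{r_{1},r_{2}\}$ and the Legendre-symbol dichotomy appear in the statement. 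You instead work purely at the level of ideals: since $\beta\mathcal{O}_{K}=\prod_i P_{i}^{b_{i}}$ with $P_{i}=\mathfrak{p}_{i}\mathcal{O}_{K}$ and extension of ideals from $\mathcal{O}_{M}$ to $\mathcal{O}_{K}$ is exponent-preserving on these inert primes, $\beta^{2}\mathcal{O}_{M}$ is the square of the ideal $\mathfrak{b}=\prod_i\mathfrak{p}_{i}^{b_{i}}$, its norm $\mathfrak{B}$ to $T$ has square-trivial class, and the oddness of $h(T)$ (genus theory, already invoked in the paper's Remark after Theorem \ref{thm:p=7}) upgrades $[\mathfrak{B}]^{2}=1$ to $[\mathfrak{B}]=1$. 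All the individual steps check out: $\sum_i b_{i}=ng/2=ns$ from $b_{i}+b_{\pi(i)}=n$, the residue degrees over $T$ are all $1$ because $M$ is the decomposition field, $U_{T}=\{\pm1\}$, and positivity fixes the sign of $\delta\bar{\delta}$. The payoff is that you always land on $x^{2}+p_{1}y^{2}=2^{ns+2}$ and the single bound $ns\geq r_{1}$, which suffices for both cases (I) and (II) since $r\leq r_{1}$; so the theorem follows, and your version actually eliminates the hypothesis on $\bigl(\frac{-p_{1}}{p_{2}}\bigr)$ and replaces $r$ by $r_{1}$ in case (II). (This is consistent with a closer look at the paper's second branch: $\beta=B\sqrt{-p_{2}}$ forces $2v_{\mathfrak{q}}(B)=-v_{\mathfrak{q}}(p_{2})$ at primes $\mathfrak{q}$ of $M$ above $p_{2}$, and the ramification index of $p_{2}$ in $M/\mathbb{Q}$ equals $g_{2}\cdot\gcd(f_{1},f_{2})$, which is odd, so that branch is in fact vacuous.) Two small housekeeping items: state explicitly that a GBF of type $\{2m,n\}$ yields $\alpha=W_{f}(0)$ with $\alpha\bar{\alpha}=2^{n}$ before invoking Lemma \ref{KD}, and that nonexistence for type $\{m,n\}$ follows from the $\{2m,n\}$ case via Lemma \ref{Lemmaobservation}(2).
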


\begin{proof}

Firstly we consider the case $p_{1}\equiv7\pmod8$ and $p_{2}\equiv 3\pmod8$. We know that $$f_{1}\mbox{\ is odd and } f_{1}\mid\frac{1}{2}\varphi(p_{1}^{a_{1}})=\frac{p_{1}-1}{2}p_{1}^{a_{1}-1},$$
$$f_{2}=2f_{2}', f_{2}' \mbox{ is odd and } f_{2}'\mid\frac{1}{2}\varphi(p_{2}^{a_{2}})=\frac{p_{2}-1}{2}p_{2}^{a_{2}-1}.$$
Therefore $g_{1}=\frac{\varphi(p_{1}^{a_{1}})}{f_{1}}\equiv2\pmod4, g_{2}=\frac{\varphi(p_{2}^{a_{2}})}{f_{2}}$ is odd, and $g=g_{1}g_{2}\cdot GCD\{f_{1},f_{2}'\}=2s, 2\nmid s$. The imaginary quadratic field $T=\mathbb{Q}(\sqrt{-p_{1}})$ is a subfield of $K_{1}$. Since $M\cap K_{i}=M_{i} (i=1,2)$ and $[M_{1}:\mathbb{Q}]=g_{1}$ is even, $[M_{2}:\mathbb{Q}]=g_{2}$ is odd, we know that $T\subseteq M_{1}\subseteq M, \sqrt{-p_{2}}\not\in M.$ Therefore $L=M(\sqrt{-p_{2}})$ is a quadratic extension of $M$ in $K$. Moreover, since $Gal(K/M)$ is the cyclic group generated by $\sigma_{2}$ and $[K:M]=\frac{\varphi(m)}{g}$ is even, we know that $L$ is the unique quadratic extension of $M$ in $K$.

\begin{tikzpicture}
\draw[-](0,0)--(0,1.5);
\node [above] at (0,1.5){$K=\mathbb{Q}(\zeta_{m})$};
\node [below] at(0,0){$L=M(\sqrt{-p_{2}})$};
\draw[-](-.1,1.5)--(-2,.7);
\node[below]at(-2,.7){$K_{1}=\mathbb{Q}(\zeta_{p_{1}^{a_{1}}})$};
\draw[-](.1,1.5)--(2,.8);
\node[below]at(2,.8){$K_{2}=\mathbb{Q}(\zeta_{p_{2}^{a_{2}}})$};
\draw[-](-2,.2)--(-2,-2)
node[left] at (-2,-1){$f_{1}$};
\node[below] at (-2,-2){$M_{1}$};
\draw[-](0,-.5)--(0,-1);
\node[right] at (0,-.75){2};
\node[below] at (0,-1){$M$};
\draw[-](2,.3)--(2,-1.8);
\node[right]at(2,-.8){$f_{2}$};
\node[below]at(2,-1.8) {$M_{2}$};
\draw[-](0,-1.4)--(0,-3.5);
\node[below] at(0,-3.5){$\mathbb{Q}$};
\node[right] at(0,-2.5){$g$};
\draw[-](2,-2.2)--(.1,-3.5);
\node[right] at (1,-3){$g_{2}$};
\draw[-](.1,-1.2)--(1.9,-1.8);
\draw[-](-.1,-1.2)--(-1.9,-2);
\draw[-](-2,-2.4)--(-1.2,-2.75);
\node[below] at(-1,-2.75) {$T=\mathbb{Q}(\sqrt{-p_{1}})$};
\draw[-](-1,-3.2)--(-.1,-3.5);
\node[above]at(-1.4,-2.6){$\frac{1}{2}g_{1}$};
\node[below]at(-.5,-3.35){$2$};
\end{tikzpicture}

Suppose that there exists a GBF with type $\{2m,n\}$, $2\nmid n\geq1$. Then we have $\alpha\in \mathcal{O}_{K}$ such that $\alpha\bar{\alpha}=2^{n}$. By lemma \ref{KD}, there exists $\beta\in \mathcal{O}_{L}$ such that $\beta^{2}\in \mathcal{O}_{M}$ and $\beta\bar{\beta}=2^{n}$. Since $\{1,\sqrt{-p_{2}}\}$ is a basis for extension $L/M,$ we have $\beta=A+B\sqrt{-p_{2}}, A,B\in M$. By $\beta^{2}=A^{2}-p_{2}B^{2}+2AB\sqrt{-p_{2}}\in \mathcal{O}_{M}$ and $\sqrt{-p_{2}}\not\in M$ we get $AB=0$. Therefore $\beta=A\in \mathcal{O}_{L}\cap M=\mathcal{O}_{M}$ or $\beta=B\sqrt{-p_{2}}\in \mathcal{O}_{L}$.

If $\beta=A\in \mathcal{O}_{M},$ then $\gamma=N_{M/T}(\beta)\in \mathcal{O}_{T}$ where $N_{M/T}(\beta)=\Pi_{\sigma\in Gal(M/T)}\sigma(\beta)$ is the norm mapping from $M$ to $T$ (and from $\mathcal{O}_{M}$ to $\mathcal{O}_{T}$). We have $\gamma\bar{\gamma}=N_{M/T}(\beta\bar{\beta})=N_{M/T}(2^{n})=2^{ns}$ where $s=\frac{g}{2}=[M:T]$. By FACT (6) of Sect. \ref{sec:4}, $\gamma=\frac{1}{2}(x+y\sqrt{-p_{1}})$ where $x, y\in \mathbb{Z}$. Then $x^{2}+p_{1}y^{2}=4\gamma\bar{\gamma}=2^{ns+2}$. From $2\nmid ns$ and the definition of $r_{1}$ we get $ns\geq r_{1}$.

If $\beta=B\sqrt{-p_{2}}\in \mathcal{O}_{L}$ and $B\in M$, consider the element $\gamma=\prod_{\sigma\in Gal(L/T)/Gal(L/M)}\sigma(\beta)=\prod_{\sigma\in Gal(M/T)}\sigma(B)\cdot\prod_{\sigma\in Gal(L/T)/Gal(L/M)}\sigma(\sqrt{-p_{2}})=N_{M/T}(B)\cdot (\pm\sqrt{-p_{2}})^{s}$ where $N_{M/T}(B) \in T$. Let $\delta=N_{M/T}(B)p_{2}^{\frac{s+1}{2}}\in T\cap\mathcal{O}_{L}=\mathcal{O}_{T}$. Then $\delta=\frac{1}{2}(x+y\sqrt{-p_{1}})$, $x,y \in \mathbb{Z}$ and
$x^{2}+p_{1}y^{2}=4\delta\bar{\delta}=4N_{M/T}(B\bar{B})p_{2}^{s+1}=4\gamma\bar{\gamma}p_{2}=2^{ns+2}p_{2}.$ Therefore $p_{2}\nmid xy$ and $x^{2}\equiv -p_{1}y^{2}\pmod {p_{2}}$. We get $\big(\frac{-p_{1}}{p_{2}}\big)=1$. This means that if  $\big(\frac{-p_{1}}{p_{2}}\big)=-1$ then $\beta=B\sqrt{-p_{2}}$ is impossible. If $\big(\frac{-p_{1}}{p_{2}}\big)=1$, from $x^{2}+p_{1}y^{2}=2^{ns+2}p_{2}$ and the definition of $r_{2}$ we get $ns\geq r_{2}$.

In summary, if there is a GBF with type $\{2m,n\} (2\nmid n),$ then $n\geq\frac{r_{1}}{s}$ when $\big(\frac{-p_{1}}{p_{2}}\big)=-1$ and $n\geq min\{\frac{r_{1}}{s},\frac{r_{2}}{s}\}=\frac{r}{s}$ when $\big(\frac{-p_{1}}{p_{2}}\big)=1$. This completes the proof of Theorem \ref{thm:735} for the case $p_{1}\equiv7\pmod8$ and $p_{2}\equiv 3\pmod8$.

Now we consider the case $p_{1}\equiv7\pmod8$ and $p_{2}\equiv 5\pmod8$. In this case, $$f_{1}\mbox{\ is odd and } f_{1}\mid\frac{1}{2}\varphi(p_{1}^{a_{1}}),$$
$$f_{2}=4f_{2}', f_{2}' \mbox{ is odd and } f_{2}'\mid\frac{1}{4}\varphi(p_{2}^{a_{2}}).$$
Therefore $g_{1}=\frac{\varphi(p_{1}^{a_{1}})}{f_{1}}\equiv2\pmod4, g_{2}=\frac{\varphi(p_{2}^{a_{2}})}{f_{2}}$ is odd, and $g=g_{1}g_{2}\cdot GCD\{f_{1},f_{2}'\}=2s, 2\nmid s$.
Then we have the same field extension diagram as one in the case $(p_{1},p_{2})\equiv (7,3)\pmod8$ except $L=M(\sqrt{p_{2}})$ for $p_{2}\equiv 5\pmod8$. The conclusion can be derived by the same argument. This completes the proof of Theorem \ref{thm:735}.
\end{proof}

\begin{remark}
We have seen the existence of $r_{1}$ in Remark (B) of Theorem \ref{thm:p=7}. Namely, $2\mathcal{O}_{T}=P\bar{P}$ and $r_{1}$ is the order of the ideal class $[P]$ in the class group $C(T)$. Particularly $r_{1}\mid h(T)=|C(T)|$. By Gauss genus theory, $h(T)$ is odd, and so is $r_{1}$.
\end{remark}

Now we show an explicit method to determine the value of $r_{2}$ in the case $p_{1}\equiv7\pmod8$, $p_{2}\equiv 3,5\pmod8$ and $\big(\frac{-p_{1}}{p_{2}}\big)=1$. By definition, $r_{2}=+\infty$ if the equation $x^{2}+p_{1}y^{2}=2^{l+2}p_{2}$ has no solution $(x,y)\in \mathbb{Z}^{2}$ for any odd integer $l\geq1$. Otherwise $r_{2}$ is the least odd integer such that $x^{2}+p_{1}y^{2}=2^{r_{2}+2}p_{2}$ has a solution $(x,y)\in \mathbb{Z}^{2}$. We claim that if $r_{2}\neq +\infty,$ then $r_{2}\leq r_{1}$. So that $r=min\{r_{1},r_{2}\}=r_{2}$.

Suppose that $r_{2}\neq +\infty,$ so that $x^{2}+p_{1}y^{2}=2^{r_{2}+2}p_{2}$ has a solution $(x,y)=(A,B)\in \mathbb{Z}^{2}$. From the minimal property of $r_{2}$ we know that $2\nmid AB$ so that $\pi=\frac{1}{2}(A+B\sqrt{-p_{1}})\in \mathcal{O}_{T}$. We have $2\mathcal{O}_{T}=P\bar{P}$ and $p_{2}\mathcal{O}_{T}=Q\bar{Q}$ by $\big(\frac{-p_{1}}{p_{2}}\big)=1$ (FACT (6) of Sect. \ref{sec:4}). Therefore $$(\pi\mathcal{O}_{T})(\overline{\pi}\mathcal{O}_{T})=(2^{r_{2}}p_{2})\mathcal{O}_{T}=(P\bar{P})^{r_{2}}Q\bar{Q}.$$
Again, from the minimal property of $r_{2}$ we get $$\pi\mathcal{O}_{T}=(P')^{r_{2}}Q', P'\in \{P, \bar{P}\}, Q'\in \{Q,\bar{Q}\}.$$
Therefore $[P']^{r_{2}}[Q']=1$. Namely, $[Q']$ belongs to the cyclic group of order $r_{1}$ generated by $[P]$. Inversely, suppose that $[P']^{l}[Q']=1$ for some $l$, $0\leq l<r_{1}-1$. If $l$ is odd, then $r_{2}=l\leq r_{1}-1$. If $l$ is even, then $[\bar{P}']^{r_{1}-l}[\bar{Q}']=([P']^{l}[Q'])^{-1}=1$ so that $r_{2}=r_{1}-l\leq r_{1}$.

In summary, if the equation $x^{2}+p_{1}y^{2}=2^{l+2}p_{2}$ has no solution $(x,y)\in \mathbb{Z}^{2}$ for $l=0,1,\ldots, r_{1}-1$, then $r_{2}=+\infty$ and $r=r_{1}$. Otherwise there exists an unique odd $l$ such that $1\leq l \leq r_{1}$ and $x^{2}+p_{1}y^{2}=2^{l+2}p_{2}$ has solution $(x,y)\in \mathbb{Z}^{2}$. Then $r_{2}=l\leq r_{1}$ and $r=min\{r_{1},r_{2}\}=r_{2}$.

\begin{example}
Consider $m=p_{1}^{a_{1}}p_{2}^{a_{2}}$ where $p_{1}=199\equiv7\pmod8$. By the Table \ref{tab:2} we know that $g_{1}=2$ and $f_{1}=\frac{\varphi(p_{1}^{a_{1}})}{g_{1}}=99\cdot 199^{a_{1}-1}$ and $r_{1}=9$.

(1). For $p_{2}=59\equiv3\pmod8, f_{2}=2f_{2}', f_{2}'=29\cdot 59^{a_{2}-1}, g_{2}=1$. Thus $g=g_{1}g_{2}\cdot GCD\{f_{1},f_{2}'\}=2$ and $s=\frac{g}{2}=1$. Since $\big(\frac{-p_{1}}{p_{2}}\big)=\big(\frac{-199}{59}\big)=-1,$ by Theorem \ref{thm:735} we get that for any $a_{1},a_{2}\geq 1,$ and $n=1,3,5,7$, there is no GBF with type $\{2\cdot199^{a_{1}}\cdot59^{a_{2}}, n\}$ and $\{199^{a_{1}}\cdot59^{a_{2}}, n\}$.

(2). For $p_{2}=101\equiv5\pmod8, f_{2}=4f_{2}', f_{2}'=25\cdot 101^{a_{2}-1}, g_{2}=1$. Thus $g=g_{1}g_{2}\cdot GCD\{f_{1},f_{2}'\}=2$ and $s=\frac{g}{2}=1$. Since $\big(\frac{-p_{1}}{p_{2}}\big)=\big(\frac{-199}{101}\big)=-1,$ by Theorem \ref{thm:735} we get that for any $a_{1},a_{2}\geq 1,$ and $n=1,3,5,7$, there is no GBF with type $\{2\cdot199^{a_{1}}\cdot 101^{a_{2}}, n\}$ and $\{199^{a_{1}}\cdot101^{a_{2}}, n\}$.

(3). For $p_{2}=5, f_{2}=4f_{2}', f_{2}'=5^{a_{2}-1}, g_{2}=1$. Thus $g=g_{1}g_{2}\cdot GCD\{f_{1},f_{2}'\}=2$ and $s=\frac{g}{2}=1$. From $11^{2}+199\cdot 1^{2}=320=2^{6}\cdot5$ we get $r_{2}=9-4=5$ (in fact, $21^{2}+199\cdot 1^{2}=2^{7}\cdot5$) and $r=r_{2}=5$. Since $\big(\frac{-p_{1}}{p_{2}}\big)=\big(\frac{-199}{5}\big)=1,$ by Theorem \ref{thm:735} we get that for any $a_{1},a_{2}\geq 1,$ and $n=1,3$, there is no GBF with type $\{2\cdot199^{a_{1}}\cdot 5^{a_{2}}, n\}$ and $\{199^{a_{1}}\cdot 5^{a_{2}}, n\}$.
\end{example}

\begin{theorem}\label{thm:35}
Suppose that $p_{1}\equiv3\pmod8$, $p_{2}\equiv 5\pmod8$ and $m=p_{1}^{a_{1}}p_{2}^{a_{2}}$ $(a_{1}, a_{2}\geq1)$. Then $g=2s$ and $s$ is odd. Let $r\geq1$ be the least odd integer such that $p_{1}x^{2}+p_{2}y^{2}=2^{r+2}$ has a solution $(x,y), x,y \in \mathbb{Z}$. Then there is no GBF with type $\{2m,n\}$ and $\{m,n\}$ provided $n$ is odd and one of the following conditions is satisfied.

(I). $\big(\frac{p_{2}}{p_{1}}\big)=1$; or

(II). $\big(\frac{p_{2}}{p_{1}}\big)=-1$ and $n<\frac{r}{s}$.
\end{theorem}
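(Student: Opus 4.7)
The strategy parallels Theorem \ref{thm:735}: apply Lemma \ref{KD} to descend to the unique quadratic subfield $T$ of $M$, then combine FACT (6) with a reduction modulo $p_{1}$ to bring the Legendre symbol into play. First I establish the field-theoretic setup. From $p_{1}\equiv 3\pmod 8$ and $p_{2}\equiv 5\pmod 8$ one has $v_{2}(f_{1})=r_{p_{1}}=1$ and $v_{2}(f_{2})=r_{p_{2}}=2$, so $g_{1},g_{2}$ are both odd and $v_{2}(\gcd(f_{1},f_{2}))=1$; hence $g=g_{1}g_{2}\cdot\gcd(f_{1},f_{2})=2s$ with $s$ odd. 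The unique quadratic subfield of $M$ corresponds to the unique nontrivial quadratic character of $Z_{m}^{*}$ that is trivial on $\sigma_{2}$; since $\bigl(\tfrac{2}{p_{1}}\bigr)=\bigl(\tfrac{2}{p_{2}}\bigr)=-1$, only the product of the two component Legendre characters works, cutting out $T=\mathbb{Q}(\sqrt{-p_{1}p_{2}})$. Likewise $\sqrt{-p_{1}},\sqrt{p_{2}}\notin M$ but $\sqrt{-p_{1}p_{2}}\in M$, so the unique quadratic extension of $M$ inside $K$ is $L=M(\sqrt{-p_{1}})=M(\sqrt{p_{2}})$.

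Suppose a GBF of type $\{2m,n\}$ exists (Lemma \ref{Lemmaobservation}(2) then handles $\{m,n\}$). By Lemma \ref{KD} there is $\beta\in\mathcal{O}_{L}$ with $\beta^{2}\in\mathcal{O}_{M}$ and $\beta\bar\beta=2^{n}$. Writing $\beta=A+B\sqrt{-p_{1}}$ with $A,B\in M$, the relations $\beta^{2}\in\mathcal{O}_{M}$ and $\sqrt{-p_{1}}\notin M$ force $AB=0$. In the sub-case $\beta=A\in\mathcal{O}_{M}$, set $\gamma=N_{M/T}(\beta)\in\mathcal{O}_{T}$; since $T$ is imaginary, $\sigma_{-1}|_{M}\notin Gal(M/T)$ and hence $\gamma\bar\gamma=N_{M/T}(2^{n})=2^{ns}$. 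By FACT (6), $\gamma=\tfrac{1}{2}(X+Y\sqrt{-p_{1}p_{2}})$, giving $X^{2}+p_{1}p_{2}Y^{2}=2^{ns+2}$. Reducing mod $p_{1}$: $X^{2}\equiv 2^{ns+2}\pmod{p_{1}}$. Since $ns+2$ is odd and $\bigl(\tfrac{2}{p_{1}}\bigr)=-1$, the right-hand side is a non-residue, so $p_{1}\mid X$; substituting gives $p_{1}\mid 2^{ns+2}$, a contradiction. This sub-case is therefore impossible regardless of the Legendre hypothesis.

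In the sub-case $\beta=B\sqrt{-p_{1}}$ with $B\in M$, I pick coset representatives $\sigma_{1},\ldots,\sigma_{s}$ of $Gal(L/M)$ in $Gal(L/T)$ from the subgroup that fixes $\sqrt{-p_{1}}$; then $\gamma:=\prod_{i=1}^{s}\sigma_{i}(\beta)=N_{M/T}(B)(\sqrt{-p_{1}})^{s}$ and $\gamma\bar\gamma=N_{M/T}(\beta\bar\beta)=2^{ns}$. Since $s+1$ is even, $\delta:=\gamma\sqrt{-p_{1}}=N_{M/T}(B)(-p_{1})^{(s+1)/2}$ lies in $T$; being a product of algebraic integers it belongs to $\mathcal{O}_{L}\cap T=\mathcal{O}_{T}$, with $\delta\bar\delta=2^{ns}p_{1}$. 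Writing $\delta=\tfrac{1}{2}(X+Y\sqrt{-p_{1}p_{2}})$ yields $X^{2}+p_{1}p_{2}Y^{2}=2^{ns+2}p_{1}$, so $p_{1}\mid X$; substituting $X=p_{1}X_{1}$ produces $p_{1}X_{1}^{2}+p_{2}Y^{2}=2^{ns+2}$, a solution of the equation defining $r$. Reducing this last equation mod $p_{1}$ forces $\bigl(\tfrac{p_{2}}{p_{1}}\bigr)=\bigl(\tfrac{2}{p_{1}}\bigr)^{ns+2}=-1$. Hence if $\bigl(\tfrac{p_{2}}{p_{1}}\bigr)=1$ this sub-case is also impossible, proving (I); otherwise the minimality of $r$ gives $ns\geq r$, contradicting $n<r/s$ and proving (II).

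The main obstacle I anticipate is securing $\delta\in\mathcal{O}_{T}$ in the second sub-case: the partial product $\gamma$ normally carries a sign ambiguity and only lies in $L\setminus T$. The trick is to select coset representatives inside the stabilizer of $\sqrt{-p_{1}}$ (killing the ambiguity), then multiply by $\sqrt{-p_{1}}$; because $s+1$ is even, $(\sqrt{-p_{1}})^{s+1}$ collapses to a rational factor and $\delta$ lands cleanly in $\mathcal{O}_{T}$.
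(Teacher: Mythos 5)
Your proposal is correct and follows essentially the same route as the paper: descend via Lemma \ref{KD} to $L=M(\sqrt{-p_{1}})=M(\sqrt{p_{2}})$, split into the cases $AB=0$, kill the first case with $\bigl(\tfrac{2}{p_{1}}\bigr)=-1$, and in the second case push down to $T=\mathbb{Q}(\sqrt{-p_{1}p_{2}})$ to obtain a representation $p_{1}X_{1}^{2}+p_{2}Y^{2}=2^{ns+2}$, from which both the Legendre obstruction and the bound $ns\geq r$ follow. The only (immaterial) differences are your choice of $\sqrt{-p_{1}}$ rather than $\sqrt{p_{2}}$ as the generator of $L/M$ and your more careful handling of the sign in the partial norm, which the paper simply absorbs into a harmless $\pm$.
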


\begin{proof}
We have $$f_{1}=2f_{1}', 2\nmid f_{1}', g_{1}=\frac{\varphi(p_{1}^{a_{1}})}{f_{1}} \mbox{ is odd},$$
$$f_{2}=4f_{2}', 2\nmid f_{2}', g_{2}=\frac{\varphi(p_{2}^{a_{2}})}{f_{2}} \mbox{ is odd}.$$
Therefore $g=g_{1}g_{2}\cdot GCD\{f_{1},f_{2}\}=2g_{1}g_{2}\cdot GCD\{f_{1}',f_{2}'\}=2s, 2\nmid s$. From $2\nmid g_{1}g_{2}$ we know that both $\mathbb{Q}(\sqrt{-p_{1}})$ and $\mathbb{Q}(\sqrt{p_{2}})$ are not subfields of $M$. Therefore $\sigma_{2}(\sqrt{-p_{1}})=-\sqrt{p_{1}}$ and $\sigma_{2}(\sqrt{p_{2}})=-\sqrt{p_{2}}$, so that $\sigma_{2}(\sqrt{-p_{1}p_{2}})=\sqrt{-p_{1}p_{2}}$ which means that $T=\mathbb{Q}(\sqrt{-p_{1}p_{2}})$ is the quadratic subfield of $M$ and the quadratic extension of $M$ is $L=M(\sqrt{p_{2}}) (=M(\sqrt{-p_{1}}))$. Thus we have the field extension diagram as following.

\begin{tikzpicture}

\draw[-](0,0)--(0,1.5);
\node [above] at (0,1.5){$K=\mathbb{Q}(\zeta_{m})$};
\node [below] at(0,0){$L=M(\sqrt{p_{2}})$};
\draw[-](-.1,1.5)--(-2,.7);
\node[below]at(-2,.7){$K_{1}=\mathbb{Q}(\zeta_{p_{1}^{a_{1}}})$};
\draw[-](.1,1.5)--(2,.8);
\node[below]at(2,.8){$K_{2}=\mathbb{Q}(\zeta_{p_{2}^{a_{2}}})$};
\draw[-](-2,.2)--(-2,-2)
node[left] at (-2,-1){$f_{1}$};
\node[below] at (-2,-2){$M_{1}$};
\draw[-](0,-.5)--(0,-1);
\node[right] at (0,-.75){2};
\node[below] at (0,-1){$M$};
\draw[-](0,-1.4)--(0,-2.3);
\node[below] at(0,-2.3) {$T=\mathbb{Q}(\sqrt{-p_{1}p_{2}})$};
\draw[-](2,.3)--(2,-1.8);
\node[right]at(2,-.8){$f_{2}$};
\node[below]at(2,-1.8) {$M_{2}$};
\draw[-](0,-2.8)--(0,-3.5);
\node[below] at(0,-3.5){$\mathbb{Q}$};
\node[right] at(0,-1.9){$s$};
\node[right]at(0,-3){2};
\draw[-](2,-2.2)--(.1,-3.5);
\node[right] at (1,-3){$g_{2}$};
\draw[-](.1,-1.2)--(1.9,-1.8);
\draw[-](-.1,-1.2)--(-1.9,-2);
\draw[-](-2,-2.4)--(-.1,-3.5);
\node[below]at(-1.2,-2.8){$g_{1}$};
\end{tikzpicture}

Suppose that there exists a GBF with type $\{2m,n\}$ and $2\nmid n\geq1$. Then we have $\alpha\in\mathcal{O}_L$ such that $\alpha^{2}\in \mathcal{O}_M$ and $\alpha\bar{\alpha}=2^{n}$. From $\alpha=A+B\sqrt{p_{2}} (A,B\in M)$ and $\alpha^{2}=A^{2}+p_{2}B^{2}+2AB\sqrt{p_{2}}\in\mathcal{O}_M$ we know that $AB=0$. Therefore $\alpha=A\in M\cap \mathcal{O}_L=\mathcal{O}_M$ or $\alpha=B\sqrt{p_{2}}.$ If $\alpha=A\in \mathcal{O}_M$, then $\beta=N_{M/T}(\alpha)\in \mathcal{O}_T$ and $\beta\bar{\beta}=N_{M/T}(\alpha\bar{\alpha})=N_{M/T}(2^{n})=2^{ns}$. By $\beta=\frac{1}{2}(x+y\sqrt{-p_{1}p_{2}}), x,y\in \mathbb{Z},$ we have $x^{2}+p_{1}p_{2}y^{2}=2^{ns+2}$ so that $x^{2}\equiv2^{ns+2}\pmod {p_{1}}$ and then $\big(\frac{2}{p_{1}}\big)=1$ since $2\nmid ns$. But $\big(\frac{2}{p_{1}}\big)=-1$ since $p_{1}\equiv3\pmod8$. We get a contradiction. Therefore $\alpha=B\sqrt{p_{2}}\in \mathcal{O}_L, B\in M$. Let $$\gamma=\prod_{\sigma\in Gal(L/T)/Gal(L/M)}\sigma(\alpha)=N_{M/T}(B)(\pm\sqrt{p_{2}}^{s})\in \mathcal{O}_L$$ where $N_{M/T}(B) \in T$. Then $\delta=N_{M/T}(B)p_{2}^{\frac{s+1}{2}}=\gamma \sqrt{p_{2}}\in \mathcal{O}_L\cap T=\mathcal{O}_T$ and $$\delta\bar{\delta}=N_{M/T}(B\bar{B})p_{2}^{s+1}=\gamma\bar{\gamma}p_{2}=(\prod_{\sigma}(\alpha\bar{\alpha}))p_{2}=2^{ns}p_{2}.$$ By $\delta=\frac{1}{2}(x+y\sqrt{-p_{1}p_{2}})$, $x,y\in \mathbb{Z}$ we get $x^{2}+p_{1}p_{2}y^{2}=2^{ns+2}p_{2}$ which implies $x^{2}\equiv2^{ns+2}p_{2}\pmod {p_{1}}$. Thus $1=\big(\frac{2p_{2}}{p_{1}}\big)=-\big(\frac{p_{2}}{p_{1}}\big)$, and $p_{2}|x$. So that $p_{2}(\frac{x}{p_{2}})^{2}+p_{1}y^{2}=2^{ns+2}$. By the definition of $r$ we get $ns\geq r$.

In summary, there is no GBF with type  $\{2m,n\}$ and $\{m,n\}$ if $2\nmid n$ and $\big(\frac{p_{2}}{p_{1}}\big)=1$, or $2\nmid n<\frac{r}{s}$ and $\big(\frac{p_{2}}{p_{1}}\big)=-1$. This completes the proof of Theorem \ref{thm:35}.
\end{proof}

\begin{remark}
We explain the existence of $r$ in the case $p_{1}\equiv3\pmod8$, $p_{2}\equiv 5\pmod8$ and $\big(\frac{p_{2}}{p_{1}}\big)=-1$. By the definition, $r$ is the least odd integer such that $x^{2}+p_{1}p_{2}y^{2}=2^{r+2}p_{2}$ has a solution $(x,y)\in \mathbb{Z}^{2}$. By FACT (6) of Sect. \ref{sec:4}, we have the decomposition in $T=\mathbb{Q}(\sqrt{-p_{1}p_{2}})$, $p_{2}\mathcal{O}_T=Q^{2}$ and $2\mathcal{O}_T=P\bar{P}$ since $-p_{1}p_{2}\equiv 1\pmod8$. We know that $Q=(p_{2}, \sqrt{-p_{1}p_{2}})$ is not a principal (prime) ideal in $\mathcal{O}_T$ and $[Q]^{2}=[(p_{2}\mathcal{O}_T)]=1$. Thus the order of $[Q]$ is two. We also know that $h(T)=2k, 2\nmid k$ (\cite{Pizer}, Proposition 3 and 4). Let $d$ be the order of $[P]$, then $x^{2}+p_{1}p_{2}y^{2}=2^{d+2}$ has a solution $(x,y)\in \mathbb{Z}^{2}$. Thus $x^{2}\equiv2^{d+2}\pmod {p_{1}}$, so that $1=\big(\frac{2}{p_{1}}\big)^{d+2}=(-1)^{d+2}$ which implies that $2\mid d$ and $d=2e$. Since $d\mid h(T)=2k$ we get $e\mid k$. Therefore $e$ is odd. Both $[P^{e}]$ and $[Q]$ are elements of order two, we get $[P^{e}]=[Q]$ since the size of the class group $C(T)$ is $h(T)\equiv 2\pmod4$ and then it has unique element of order two. Therefore $[P^{e}Q]=1$. Namely, $P^{e}Q$ is a principal ideal $\alpha\mathcal{O}_T$ where $\alpha=\frac{1}{2}(x+y\sqrt{-p_{1}p_{2}}), (\alpha\bar{\alpha})\mathcal{O}_T=(P\bar{P})^{e}Q\bar{Q}=(2^{e}p_{2})\mathcal{O}_T$. Therefore $x^{2}+p_{1}p_{2}y^{2}=2^{e+2}p_{2}, 2\nmid e$. From the minimal property of $r$ we get that $r$ is just $e$, the half of the order of $[P]$.
\end{remark}

\begin{example}
Let $m=p_{1}^{a_{1}}p_{2}^{a_{2}}$ $(a_{1}, a_{2}\geq1)$ where $p_{1}=19\equiv3\pmod8$, $p_{2}=29 \equiv 5\pmod8$, then $\big(\frac{p_{2}}{p_{1}}\big)=\big(\frac{29}{19}\big)=-1$, $f_{1}=18\cdot 19^{a_{1}-1}, g_{1}=1$, $f_{2}=28\cdot 29^{a_{2}-1}, g_{2}=1$ and so $g=g_{1}g_{2}\cdot GCD\{f_{1},f_{2}\}=2, s=\frac{g}{2}=1$. The equation $19x^{2}+29y^{2}=2^{l+2}$ has no solution $(x,y)\in \mathbb{Z}^{2}$ for $l=1,3,5,7,9$ and $11$, but $2^{13+2}=32768=19\cdot 21^{2}+29\cdot 29^{2}$. We get $r=13$ and , by Theorem \ref{thm:35}, there is no GBF with type $\{2\cdot 19^{a_{1}}29^{a_{2}},n\}$ and $\{19^{a_{1}}29^{a_{2}},n\}$ for any $a_{1}, a_{2}\geq1$ and $n=1,3,5,7,9,11$.

\end{example}

\section{Conclusion}\label{sec:6}
In this paper we deal with the existence problem on a generalized bent function $f:Z_{2}^{n} \rightarrow Z_{m}$ , called a GBF with type $\{m,n\}$. In Sect. \ref{sec:2} we proved that there is a GBF with type $\{m,n\}$ if $2\mid GCD\{m,n\}$ or $4\mid m$. For remain case we presented a series of nonexistence results in Sect.\ref{sec:3}-\ref{sec:5} including the following results.

There is no GBF with type $\{m,n\}$ for any odd integer $n\geq1$ if one of the following conditions on $m$ is satisfied ($p, p_{1},\ldots, p_{s}$ denote odd prime numbers, $a, a_{1}, \ldots, a_{s}\geq 1$).

(A). $m=p^{a}$ (Corollary \ref{maincor});

(B). $m=2p^{a}$, $p\equiv3, 5\pmod 8$, or $p\equiv1\pmod 8$ and the order of 2 in $Z_{p}^{*}$ is even (Corollary \ref{cor:semi});

(C). $m=2p_{1}^{a_{1}}p_{2}^{a_{2}}$, $p_{1}\equiv3 \pmod 8$, $p_{2}\equiv5\pmod 8$ and $\big(\frac{p_{2}}{p_{1}}\big)=1$ (Theorem \ref{thm:35});

(D). $m=2p_{1}^{a_{1}}p_{2}^{a_{2}}\cdots p_{s}^{a_{s}}$, $p_{1},\ldots, p_{s}$ are distinct and $p_{1}\equiv p_{2} \equiv\ldots \equiv p_{s} \equiv3\pmod 8$, or $p_{1}\equiv p_{2}\equiv \ldots \equiv p_{s} \equiv5\pmod 8$ (Corollary \ref{maincor});

In Sect. \ref{sec:4}, the decomposition field $M$ of $2$ in $K=\mathbb{Q}(\zeta_{m})$ is real. For all cases in Sect.\ref{sec:5}, $[M,\mathbb{Q}]=2s, 2\nmid s$ and $M$ contains an imaginary quadratic subfield $T$. The nonexistence results are derived by using field-descent method from $K$ to $M$ (Lemma \ref{KD}), then from $M$ to $T$ by norm mapping, and by using some knowledge on the decomposition law of prime numbers in $\mathcal{O}_T$ and the structure of the class group of $T$. Next step we may consider the case that $[M,\mathbb{Q}]=4s, 2\nmid s$ and $M$ contains an imaginary quartic subfield $T$. In this case we would use specific knowledge on integral basis of $\mathcal{O}_T$, the decomposition law in $\mathcal{O}_T$ and the structure of the class group of $T$. The derived conditions on nonexistence of GBF would be more complicated if the Galois group $G(T/\mathbb{Q})$ is cyclic.

\end{document}